\documentclass[11pt,A4paper,fullpage]{article}

\usepackage{todonotes} 

\usepackage{amsfonts}
\usepackage{fullpage}
\usepackage{graphicx,color}
\usepackage{xspace}
\usepackage{enumerate}
\usepackage{amssymb,amsmath}
\usepackage{amsthm}
\usepackage{subfig}

\usepackage[utf8]{inputenc}
\usepackage{times}

\usepackage[utf8]{inputenc}
\usepackage{authblk}
\usepackage{hyperref}

\usepackage{amsfonts}
\usepackage{graphicx,color}
\usepackage{xspace}
\usepackage{enumerate}
\usepackage{amssymb,amsmath}
\usepackage{amsthm}
\usepackage{tikz}

\usepackage{mathtools}

\newtheorem{theorem}{Theorem}

\newtheorem{corollary}[theorem]{Corollary}

\newtheorem{lemma}[theorem]{Lemma}

\def\multiset#1#2{\ensuremath{\left(\kern-.3em\left(\genfrac{}{}{0pt}{}{#1}{#2}\right)\kern-.3em\right)}}

\newcommand{\mch}[2]{
\left.\mathchoice
  {\left(\kern-0.48em\binom{#1}{#2}\kern-0.48em\right)}
  {\big(\kern-0.30em\binom{\smash{#1}}{\smash{#2}}\kern-0.30em\big)}
  {\left(\kern-0.30em\binom{\smash{#1}}{\smash{#2}}\kern-0.30em\right)}
  {\left(\kern-0.30em\binom{\smash{#1}}{\smash{#2}}\kern-0.30em\right)}
\right.}

\newcommand{\comp}{\ensuremath{\textrm{Cr}}\xspace}

\begin{document}

\title{Competitive Search in the Line and the Star with Predictions} 

\author{Spyros Angelopoulos \\ CNRS and LIP6, Sorbonne University}

\date{}

\maketitle

\begin{abstract}
We study the classic problem of searching for a hidden target in the line and the $m$-ray star, 
in a setting in which the searcher has some {\em prediction} on the hider's position. We first focus on the main metric for comparing search strategies under predictions; namely, we give positive and negative results on the {\em consistency-robustness} tradeoff, where the performance of the strategy is evaluated at extreme situations in which the prediction is either error-free, or adversarially generated, respectively. For the line, we show tight bounds concerning this tradeoff, under the {\em untrusted advice} model, in which the prediction is in the form of a $k$-bit string which encodes the responses to $k$ binary queries. For the star, we give tight, and near-tight tradeoffs in the {\em positional} and the {\em directional} models, in which the prediction is related to the position of the target within the star, and to the ray on which the target hides, respectively. Last, for all three prediction models, we show how to generalize our study to a setting in which the performance of the strategy is evaluated as a function of the searcher's desired tolerance to prediction errors, both in terms of positive and inapproximability results.
\end{abstract}

\section{Introduction}
\label{sec:introduction}

Searching for a hidden target is one of the original disciplines within the field of 
Operations Research, 
but also a topic of significant study in Computer Science, both from the point of view of theoretical analysis and applications. This class of problems typically involves a mobile {\em searcher} that must locate an immobile {\em target} (often called {\em hider}) which hides in some unknown point of the search {\em environment}. Search problems provide natural formulations of real-life applications such as search-and-rescue missions~\cite{DBLP:journals/eor/Lidbetter20}, de-mining operations~\cite{AL:expanding},  and robot-based exploration~\cite{DBLP:journals/csr/GhoshK10}.

Among the most well-studied search problems is {\em searching on the line}, in which the environment is the unbounded line, and its generalization, the {\em $m$-ray search}, or {\em star search} problem. In the $m$-ray search problem, the environment consists of $m$ unbounded and concurrent rays, with a common point $O$, which is called the {\em origin}. Starting from $O$, the searcher must locate the target by following a {\em strategy}, defined as an infinite sequence of the form $(x_i,u_i)_{i}$, where $x_i \in \mathbb{R}^+$ and $u_i \in \{0,\ldots, m-1\}$, and with the following semantics: in iteration $i$, the searcher starts from $O$, traverses the ray $u_i$ up to distance $x_i$ from $O$, then returns back to $O$, before continuing with iteration $i+1$, until the target is eventually located. Note that for $m=2$, the star environment reduces to the infinite line.

Since the search environment is unbounded, the standard framework for evaluating the performance of a search strategy is {\em competitive analysis}, first introduced in~\cite{beck:yet.more}. Given a target $t$ hiding at some unknown point of the star, define $d(t)$ as the distance of $t$ from $O$, and $d(X,t)$ as the distance covered (or {\em cost} incurred) by a searcher that follows $X$, until $t$ is located (i.e., the first time the searcher is reached, assuming a unit-speed searcher). The  {\em competitive ratio} of $X$ is formally defined as 
\begin{equation}
\comp(X)=\sup_{t} \frac{d(X,t)}{d(t)}.
\label{eq:comp.ratio}
\end{equation}

Searching on the line has a long history of study, going back to the work of Bellman~\cite{bellman} and Beck~\cite{beck:ls}. Beck and Newman~\cite{beck:yet.more} were the first to show that an optimal competitive ratio equal to 9 can be obtained by a simple doubling strategy, i.e., a strategy of the form $x_i=2^i$. The $m$-ray search problem was first studied in the seminal works of Gal~\cite{gal:general,gal:minimax} and independently by Baeza-Yates {\em et al.}~\cite{yates:plane}. Both problems have been extended in a variety of settings and generalizations related to TCS, AI and OR since the 1960s, due to their useful abstraction of resource allocation under uncertainty. For instance, linear and ray searching have connections to the design of {\em interruptible systems} in AI~\cite{spyros:rays,steins}, the design of {\em hybrid} algorithms~\cite{hybrid}, and pipelined {\em filter ordering} in databases~\cite{Condon:2009:ADA:1497290.1497300}. They are  also involved in the analysis of strategies for more complex search problems, such as {\em spiral} search on the plane~\cite{langetepe2010optimality}. There are numerous studies on variants of linear and star search; see, e.g.,~\cite{jaillet:online,koutsoupias:fixed,HIKL99:fixed.distance,alex:robots,ray:2randomized,revisiting:esa,demaine:turn,DBLP:conf/stacs/0001DJ19,DBLP:journals/dc/CzyzowiczKKNO19,kupavskii2018lower,DBLP:conf/latin/BonatoGMP20,hyperbolic,oil,stacs-expanding,ultimate,schuierer:lower,koutsoupias:fixed,eleos} for some representative works, as well as the book~\cite{searchgames} for a game-theoretic perspective of these problems.

\subsection{Searching with predictions}
\label{subsec:intro.models}

In this work, we study the power and limitations of search strategies with {\em predictions}, in which the searcher aims to improve the competitive ratio of its strategy by leveraging some inherently imperfect information on the target. This follows a very active line of research in online computation and algorithms with incomplete information, that was initiated with the works~\cite{DBLP:conf/icml/LykourisV18} and~\cite{NIPS2018_8174}. A very large number of problems have been studied under this model (see, e.g., the survey~\cite{DBLP:books/cu/20/MitzenmacherV20} and the online collection~\cite{predictionslist}).

In regards to the search problems we study, the nature of the prediction may vary according to the application at hand. We are interested in the following models, which were introduced in~\cite{DBLP:conf/innovations/000121} in the context of linear search. 

\medskip
\noindent
{\bf The prediction is a {\em $k$-bit string}.} \ Here, the search is enhanced with a $k$-bit string that encodes some information on the target; alternatively, we may think of the prediction string as responses to binary {\em queries} given by $k$ experts. For example, a single bit can provide a (potentially erroneous) response to queries such as ``Is the target at distance at most $d$ from $O$'', or ``Is the target on an even-indexed ray?''. This is a powerful model that generalizes the concept of {\em advice complexity} so as to allow for advice that may be erroneous. Note that search and exploration problems have been studied extensively under the standard advice complexity model (see, e.g.,~\cite{dobrev2012online,fraigniaud2008tree,gorain2018deterministic,komm2015treasure,pelc2021advice}), however all such studies rely crucially on advice that is error-free. Moreover, unlike works in which each query is {\em noisy}~\cite{boczkowski2018searching}, i.e., the query responses are erroneous with some known probability, in our setting we do not rely on any probabilistic assumptions in regards to the quality of the advice.


\medskip
\noindent
{\bf The prediction is {\em directional}.} \ Here, the prediction is an index in $\{0, \ldots ,m-1\}$ which describes the ray on which the target lies. This is a natural prediction can be useful, for example, in a search-and-rescue application, in which there is a hint about the direction a missing person may have taken when last seen. 

\medskip
\noindent
{\bf The prediction is {\em positional}.} \ Here, the prediction describes the position of the target within the environment, namely it is of the form $(d,u)$, where $d$ corresponds to the predicted distance from the origin, and $u$ corresponds to the predicted ray on which the target hides. This is, likewise, a very natural prediction (e.g., in a search-and-rescue mission, it provides a hint about the last reported whereabouts of the missing person).

\medskip

We establish two objectives towards the evaluation of search strategies with predictions. The first objective is to find strategies of optimal, or near-optimal tradeoff between their {\em consistency} (namely, the competitive ratio assuming error-free prediction) and their {\em robustness} (namely, the competitive ratio assuming adversarially generated predictions). This is one of the standard methods of analyzing algorithms with predictions, since it establishes strong guarantees on worst-case (extreme) situations with respect to the quality of the prediction; see, e.g.,~\cite{wei2020optimal,li2021robustness,lee2021online,DBLP:conf/innovations/000121, DBLP:conf/aaai/0001K21} for applications to other online problems, and settings of incomplete information, more generally. Specifically, we are interested in showing both positive and negative results on the best-possible consistency that can be achieved by $r$-robust strategies, for any given $r$.

Our second objective goes beyond the consistency/robustness tradeoffs, and we evaluate the performance of the search strategy beyond the two extreme scenarios of error-free and adversarial error. Specifically, we study the novel setting in which the searcher defines an application-specific {\em tolerance} parameter $H$ that determines its desired tolerance to errors or, equivalently, an anticipated upper bound on the prediction error (that may be known by historical data on previous searches). This parameter is defined appropriately for each of the three prediction models we study:
Namely, in the untrusted advice model, $H$ is related to the number of erroneous advice bits 
(or query responses); in the directional model $H$ describes the distance of the predicted ray index to the one of the actual hiding target; and in the positional model, $H$ is related to the distance between the predicted and the actual target position. The tolerance model is motivated by recent works in learning-enhanced online algorithms with {\em weak predictions}, in which the prediction is an upper bound of some pertinent parameter of the input (see e.g., online knapsack with frequency predictions~\cite{knapsack:frequency}, where the prediction is an upper bound on the size of items that appear online). Our objective is thus to quantify the tradeoff between the competitive ratio and the robustness as a function of the tolerance and other parameters of the problem (e.g., the number of queries, in the query-based model). Following~\cite{knapsack:frequency}, we use the term {\em weak prediction} to refer to this setting. 

The problems we study have applications in more general decision-making settings that go beyond search. This is since $m$-ray search, as discussed above, is an abstraction of resource allocation among $m$ different tasks. To illustrate with an example, consider a researcher who has to allocate time among $m$ different projects, without knowing ahead of time which project will be completed successfully. The researcher, however, may have some intuition about which of these tasks is the most likely to succeed. This problem fits the $m$-ray search abstraction with a directional prediction. In the weak prediction setting, $H$ describes, more generally, the specific projects which the researcher believes are more likely to be successful. 

Learning-augmented search has received attention in recent years. \cite{DBLP:conf/innovations/000121} studied consistency/robustness (Pareto) tradeoffs for linear search in the three prediction models described above. \cite{DBLP:conf/innovations/BanerjeeC0L23} studied a graph search setting where every node in the graph provides a prediction of its distance to the target vertex. 
\cite{eberle2022robustification} showed how to robustify graph exploration algorithms, where the prediction is related to the spanning tree of the explored graph.

\subsection{Contribution}

Our first results apply to the untrusted advice model (i.e, the $k$-query model). We prove tight upper and lower bounds on the best consistency that an $r$-robust strategy for linear search can achieve, for any $r\geq 9$, any size of advice $k$, and with no assumptions on the nature of the strategy. This improves upon both the upper and the lower bounds of~\cite{DBLP:conf/innovations/000121}, which gave a non-tight lower bound for $k=1$ and $r=9$, and a non-tight lower bound for $r>9$ and $k=1$ for a restricted class of strategies called {\em asymptotic}. Here, the challenge is on the lower bound side. Specifically, we reduce the problem to a {\em parallel search} problem that involves $2^k$ searchers, and we rely on a novel application of Gal's functional theorem~\cite{Gal80} to prove an information-theoretic tight lower bound. While this theorem has been previously applied in parallel search problems~\cite{alex:robots}, its application in our setting is much more challenging, since we require that each of the $2^k$ searchers must be individually $r$-robust. Specifically, unlike previous works, the proof requires an explicit {\em labeling} scheme that maps the search lengths of each parallel searcher to lengths of a ``global'' sequence. We also extend our upper bound to weak predictions, by applying tools from the theory of games with a {\em lying responder}~\cite{RivestMKWS80}, in order to bound the effect of erroneous query responses to the performance.

Our second class of results is on the directional prediction model of $m$-ray search. We give the first upper and lower bounds on the consistency-robustness tradeoffs, which extend those of~\cite{DBLP:conf/innovations/000121} to star search. Here, the main challenge is again on the lower bound side, and specifically in the weak predictions setting. We show how a generalization of a biased search approach, in which the searcher allocates more time towards the predicted ray, allows us to prove an asymptotically tight bound on the competitive ratio as a function of the tolerance and the number of rays. 

Last, we show tight (Pareto-optimal) consistency-robustness tradeoffs for $m$-ray search in the positional model. As with the directional model, the only previous known results applied to linear search~\cite{DBLP:conf/innovations/000121}. The proof uses tools that circumvent the exact study of linear recurrence relations inherent in $m$-ray search problems. To our knowledge, this is a new approach towards impossibility results on this type of search games. As with the other models, we also provide tight upper and lower bounds on the competitive ratio under weak predictions. We emphasize that beyond the tight and near-tight results, the generalization to star search and the accompanied analysis under weak predictions are conceptually novel aspects of this work.

\section{Preliminaries}
\label{sec:preliminaries}

We review some notation and known results concerning $m$-ray searching. Without predictions, a
strategy is described by a sequence of the form $X=(x_i,u_i)_{i\geq 0}$; we refer to $i$ as the 
{\em iteration}\footnote{We will consider a numbering of iterations that starts either with 0, or with 1, depending on which simplifies the presentation.} of the strategy, to $x_i$ as the {\em length} of the $i$-th search {\em segment}, to $u_i$ as the ray searched in iteration $i$, and to the point at which the searcher turns in the $i$-th iteration as the corresponding {\em turn point}. Note that for linear search ($m=2$), we may assume, without loss of generality, that $u_i =i \bmod m$, 
and that $x_{i+2} > x_i$. We make the standing assumption that the target lies within distance at least a fixed value, otherwise every strategy has unbounded competitive ratio. It is well-known that the worst-case hiding positions of the target, i.e., the positions that maximize $\comp(X)$, are infinitesimally beyond the turn points of a searcher that follows $X$, namely, at distances $x_i+\epsilon$ on rays $u_i$, for $\epsilon>0$. 

%

A strategy for $m$-ray search is called {\em cyclic}, if it explores the rays in a fixed permutation 
of $\{0, \ldots ,m-1\}$, e.g., if $u_i= i \bmod m$. The competitive ratio of a cyclic strategy 
of the form $X=(x_i, i\bmod m)_i$ is easily shown to be equal to
\begin{equation}
\comp(X)=1+\sup_i \frac{2\sum_{j=0}^{i+m-1}x_j}{x_i}.
\label{eq:comp.cyclic}
\end{equation}
In particular, a cyclic strategy is called {\em geometric} if $x_i=b^i$, for some fixed $b>1$ which is called the {\em base} of the strategy; we will denote such strategies by $G_b$.  Geometric strategies are significant since they are often optimal for several variants of linear search. From~\eqref{eq:comp.cyclic}, it follows that the competitive ratio of $G_b$ is therefore equal to 
$1+2b^{m}/(b-1)$. This expression is minimized for $b=m/(m-1)$, and the resulting optimal competitive ratio, denoted by $r^*_m$, is equal to 
\[
1+2\rho_m^*, \ \textrm{ where }\  \rho_m^*=\frac{m^m}{(m-1)^{m-1}}.
\]
Thus, given $r\geq r_m^*$, strategy $G_b$ has competitive ratio at most $r$ if $b^m/(b-1) \leq \rho_r$,
where $\rho_r$ is defined to be equal to $(r-1)/2$. We will denote by $b_r$ the largest such $b$ for which $G_b$ is $r$-competitive, i.e., the largest real root of the function $f(x)=x^m/(m-1)-\rho_r$. 

Under the prediction framework, the searcher is given some information $h$ in regards to the target $t$, and determines a strategy $X_h$ (we will often omit $h$ when it is clear from context). Following~\cite{DBLP:conf/innovations/000121}, we define the {\em consistency} of a strategy as its competitive ratio assuming no prediction error, and its {\em robustness} as its competitive ratio assuming adversarial prediction error. Note, in particular, that the robustness of a strategy is equal to its competitive ratio without any prediction, and we will thus use these two terms interchangeably. We say that a strategy is {\em $r$-robust} if its robustness is at most $r$ (similarly for the consistency), and that it is {\em Pareto-optimal} if its consistency and robustness are in an optimal tradeoff relation. 

Let $Y=(y_i)_{i=0}^\infty$ denote a sequence in $\mathbb{R}^+$. We define $\alpha_Y$ as
$
\alpha_Y =\limsup_{n \rightarrow \infty} y_n^{1/n}.
$
This parameter appears prominently in Gal's theorem~\cite{Gal80}, which we state below and which, informally, gives a lower bound on the supremum of a set of functionals by the supremum of these functionals over geometrically increasing sequences. From it, it follows that any $m$-ray search strategy $Y$ with search lengths $(y_i)_i$ has competitive ratio at least 
$1+2\frac{\alpha_Y^m}{\alpha_Y-1}$, hence if the strategy is $r$-competitive it must be that $\alpha_Y \leq b_r$. Given an infinite sequence 
$X=(x_0, x_1, \ldots)$, define $X^{+i}=(x_i,x_{i+1}, \ldots)$ as the subsequence starting at $i$, i.e., $X^{+i}=(x_i,x_{i+1},\ldots)$.

\begin{theorem}[\cite{Gal80}]\label{thm:sch}
  Let $X = (x_0,x_1,\ldots)$ be a sequence of positive numbers, $r$
  an integer, and $\alpha_X = \limsup_{n\rightarrow\infty} (x_n)^{1/n}$, for
$\alpha\in \mathbb{R} \cup
\{+\infty\}$.  
Let $F_i$, $i \geq 0$ be a sequence of functionals which satisfy the following properties:
  \begin{enumerate}
  \item $F_i(X)$ only depends on $x_0,x_1, \ldots ,x_{i+r}$,
    \item $F_i(X)$ is continuous for all $x_k>0$, with $0 \leq k
      \leq i + r$, 
    \item $F_i(\lambda X) = F_i(X)$, for all $\lambda > 0$, 
    \item $F_i(X+Y) \leq \max(F_i(X),F_i(Y))$, and
    \item $F_{i+k}(X) \geq F_i(X^{+k})$, for all $k \geq
      1$,  
  \end{enumerate}
  then
  \[
     \sup_{0 \leq i < \infty} F_i(X)
     \geq  \sup_{0 \leq i < \infty} F_i(G_{\alpha_X}), \ 
     \quad \textrm{where $G_a=(a^0,a^1,a^2, \ldots )$.}
  \]
\end{theorem}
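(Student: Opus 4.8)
The plan is to prove the equivalent dual inequality $F_j(G_a) \le \sup_{i} F_i(X)$ for each fixed index $j$, and then take the supremum over $j$; here $a := \alpha_X$, which I assume finite, deferring $a = +\infty$ to the end. The whole argument rests on one idea: to exhibit $G_a$, up to a positive scalar, as a coordinatewise limit of finite nonnegative combinations of shifted copies of $X$, and then to transport a bound through the five listed properties.

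First I would record the universal upper bound that holds for \emph{any} such combination. Fix coefficients $c_0, \ldots, c_{N-1} > 0$ and set $W = \sum_{k=0}^{N-1} c_k\, X^{+k}$, taken coordinatewise, so each summand is again a positive sequence. The max-subadditivity property~(4), applied inductively, gives $F_j(W) \le \max_{0 \le k \le N-1} F_j(c_k X^{+k})$; the scale-invariance property~(3) erases the coefficients, $F_j(c_k X^{+k}) = F_j(X^{+k})$; and property~(5), read as $F_j(X^{+k}) \le F_{j+k}(X)$, lands each term inside the target supremum. Hence $F_j(W) \le \sup_i F_i(X)$ for every $N$ and every choice of positive coefficients.

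It then remains to choose the combinations so that $F_j(W)$ approaches $F_j(G_a)$. Writing $z_m = x_m a^{-m}$, so that $\limsup_m z_m^{1/m} = 1$, the $l$-th coordinate of $W$ equals $a^l \sum_{k=0}^{N-1} d_k\, z_{l+k}$ with $d_k = c_k a^k \ge 0$; thus, after normalizing $W$ by its zeroth coordinate, it converges to $G_a$ on the finitely many coordinates $0, \ldots, j+r$ that matter (by the finite-dependence property~(1)) exactly when the weights $d_k$ can be chosen so that the windowed sums $\sum_k d_k z_{l+k}$ are asymptotically flat in $l$. Granting such weights, the continuity property~(2) (applicable since by~(1) only coordinates $0,\ldots,j+r$ are involved) gives $F_j(W) \to F_j(G_a)$, and combined with the universal bound this yields $F_j(G_a) \le \sup_i F_i(X)$. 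The case $a = +\infty$ then follows by running the same argument with $G_a$ replaced by $G_b$ for every finite $b$, where the relevant bulk sums diverge automatically, and letting $b \to \infty$.

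The crux, and the step I expect to be the main obstacle, is this flattening: producing nonnegative combinations of the shifted window-vectors $(z_{k}, z_{k+1}, \ldots, z_{k+j+r})$ that approximate the all-ones direction. When the bulk sums $\sum_{q < N} z_q$ diverge, one may simply take all $d_k$ equal, since then the boundary terms of the sliding window are negligible relative to the bulk; this negligibility is precisely where $\limsup_m z_m^{1/m} = 1$ (and not mere boundedness) enters, for otherwise the prefix sums would grow geometrically and force $\limsup_m z_m^{1/m} > 1$. The general case requires showing that the all-ones vector always lies in the closed convex cone generated by the window-vectors; the only obstruction would be a separating functional $\sum_s p_s z_{k+s} \ge 0$ for all $k$ with $\sum_s p_s < 0$, and the heart of the matter is that any such inequality forces $z$ to grow exponentially, once again contradicting $\limsup_m z_m^{1/m} = 1$. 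Establishing this dichotomy cleanly, and managing the attendant double limiting process, is the technical core of the theorem; everything else is bookkeeping with properties~(1)--(5).
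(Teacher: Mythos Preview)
The paper does not prove this theorem: it is stated as a cited result from Gal~\cite{Gal80} and used as a black box, so there is no ``paper's own proof'' to compare against.

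That said, your sketch is in fact the classical argument for Gal's theorem. The reduction via properties~(3)--(5) to the bound $F_j\bigl(\sum_k c_k X^{+k}\bigr) \le \sup_i F_i(X)$ is exactly right, as is the strategy of producing convex combinations of shifted copies of $X$ that approximate $G_{\alpha_X}$ on the first $j+r+1$ coordinates and then invoking~(1)--(2). Your identification of the crux---showing that the all-ones direction lies in the closed convex cone generated by the window vectors $(z_k,\ldots,z_{k+j+r})$, and that any separating hyperplane would force exponential growth of $z$ contradicting $\limsup z_m^{1/m}=1$---is accurate and is precisely where the work lies in Gal's original proof. The divergent-bulk case with uniform weights is also handled correctly. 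The one place to be careful is the bounded-bulk case: the separating-hyperplane argument needs a genuine quantitative step (a linear recurrence or characteristic-root argument) to force $\limsup z_m^{1/m}>1$, and your sketch gestures at this without supplying it; but this is a known and fillable gap, not a flaw in the approach.
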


\section{Linear search with untrusted advice}

In this section, we study linear search in a model in which the prediction is an untrusted advice string of size $k$. 
We first show optimal upper and lower bounds on the best consistency of $r$-robust strategies, then in Section~\ref{subsec:advice.noisy} we study the extension to weak predictions. 

Our results will show and exploit connections between a single-searcher strategy with $k$-bit advice, and a {\em multi-searcher} strategy with $2^k$ parallel searchers, but no advice.  Hence, we first present some definitions and notation concerning the setting of $p>1$ parallel searchers, labeled from the set $\{0, \ldots ,p-1\}$. In a $p$-searcher strategy, each searcher $j$ defines its own strategy of the form $X_j=(x_{j,i},u_{j,i})_{i=0}^\infty$. 
We thus denote the $p$-searcher strategy as ${\cal X}=\{X_j\}_{j=0}^{p-1}$, or equivalently, we say 
that it is defined by the set $\{X_j\}_{j=0}^{p-1}$. 
The competitive ratio of a $p$-searcher strategy is the worst-case ratio of the first time one of the $p$ searchers finds the target $t$ (assuming unit-speed searchers) and the distance $d(t)$ of the target from the origin~\cite{alex:robots}. 

Observe that the optimal consistency of an $r$-robust strategy with $k$ advice bits is equal to the competitive ratio of a parallel search strategy that is defined by $2^k$ searchers, each of which is individually $r$-robust. Namely, if the advice is error-free, it can be used to select the single-searcher strategy, among the $2^k$ ones, that reaches the target at optimal cost. Note that, by construction, the robustness of this strategy is at most $r$, since each individual searcher is $r$-robust. This observation applies to both positive and negative results on the consistency/robustness tradeoffs.


We first show an upper bound on the consistency of $r$-robust strategies:
\begin{theorem}
For any $r\geq 9$, there is an $r$-robust strategy for searching on the line with $k$-bit advice that has consistency at most $1+2\frac{b_r^{1/q}}{b_r-1}$, where $q=2^{k-1}$.
\label{thm:upper.advice}
\end{theorem}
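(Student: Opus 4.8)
The plan is to exploit the reduction stated immediately before the theorem: the optimal consistency of an $r$-robust strategy with $k$ advice bits equals the parallel competitive ratio of a strategy consisting of $p=2^k$ single-searcher strategies, each of which is individually $r$-robust (error-free advice selects, among the $p$ searchers, the one that reaches the target first, while any adversarial selection remains $r$-robust). Hence it suffices to construct $p=2q$ searchers on the line, each of competitive ratio at most $r$, whose parallel competitive ratio $\sup_t \min_{0\le j<p} d(X_j,t)/d(t)$ is at most $1+2b_r^{1/q}/(b_r-1)$. I would take each searcher to be a geometric strategy of base exactly $b_r$ — so that by the definition of $b_r$ each is $r$-competitive in isolation — and stagger their scales so that, collectively, the turn points the searchers place on each of the two rays form a single geometric sequence of the much smaller base $\beta:=b_r^{1/q}$. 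Intuitively, the robustness constraint forces each searcher to use base $b_r$, but $2q$ interleaved copies refine the effective per-ray coverage to base $b_r^{1/q}$.

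Concretely, set $\beta=b_r^{1/q}$, and for $s=0,\ldots,q-1$ let searcher $s$ run $(\beta^{s}b_r^{\,i},\,i\bmod 2)_{i\ge0}$ (so it visits ray $0$ at even $i$), and let searcher $q+s$ be its mirror image (the same lengths with rays $0$ and $1$ swapped). Since $b_r=\beta^{q}$, the ray-$0$ turn points of searcher $s$ are $\beta^{s}(b_r^2)^t=\beta^{\,s+2qt}$, while its mirror contributes the offsets $\beta^{\,s+q+2qt}$ to ray $0$; as $s$ ranges over $\{0,\ldots,q-1\}$ these offsets sweep all residues modulo $2q$, so the union of all searchers' turn points on each ray is exactly $\{\beta^m:m\ge0\}$. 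Each individual searcher is a base-$b_r$ geometric strategy up to scaling and reflection, neither of which changes the competitive ratio, so by~\eqref{eq:comp.cyclic} each has competitive ratio $1+2b_r^2/(b_r-1)\le r$; this part is immediate.

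It remains to bound the parallel competitive ratio, which is the main computation. The worst hiding positions are infinitesimally beyond a combined turn point $z=\beta^m$ on some ray, with $d(t)=z$. The searcher that reaches such a target first is the one whose \emph{next} turn point on that ray equals $\beta^{m+1}$: it reaches depth $z$ while traversing the segment aimed at $\beta^{m+1}$, after completing the round trips of all its shallower segments. Writing that searcher's relevant turn point as $c\,b_r^{2t^{*}}=\beta^{m+1}$, its cost is $2c\frac{b_r^{2t^{*}}-1}{b_r-1}+z=\frac{2(\beta^{m+1}-c)}{b_r-1}+\beta^m$, giving ratio $1+\frac{2\beta}{b_r-1}-\frac{2c}{(b_r-1)\beta^m}\le 1+\frac{2b_r^{1/q}}{b_r-1}$ for every $m$. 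I would then check that any other searcher's first segment exceeding $z$ reaches some $\beta^{m+j}$ with $j\ge2$ and is therefore strictly slower, so the minimum over searchers is attained at the one described, and that targets shallower than $\beta^0$ are absorbed by the standing minimum-distance assumption.

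The step I expect to be the real obstacle is this last one: correctly identifying, among $2q$ interleaved searchers, which reaches a given target first, and verifying that the supremum over all $m$ equals the claimed value in the limit rather than being inflated by the initial, potentially sparsely covered turn points. The staggering and mirroring of offsets is exactly what guarantees that both rays are uniformly covered with base $\beta$ from $\beta^0$ onward, so that no boundary turn point yields a ratio exceeding $1+2b_r^{1/q}/(b_r-1)$. Making these offsets line up — that is, labeling each searcher's lengths by their positions in the global base-$\beta$ sequence — is the delicate bookkeeping here; it is the constructive counterpart of the labeling device that the matching lower bound will require.
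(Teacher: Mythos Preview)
Your proposal is correct and takes essentially the same approach as the paper: both construct $2q$ geometric single-searcher strategies with base $b_r$ and staggered scale factors $\beta^0,\ldots,\beta^{2q-1}$ (with $\beta=b_r^{1/q}$), half starting on each ray, so that each searcher is individually $r$-robust while collectively the turn points on each ray form the base-$\beta$ sequence. One small simplification: for an upper bound you do not need to argue that the searcher heading to $\beta^{m+1}$ is the \emph{first} to reach the target---exhibiting any searcher with cost at most $\bigl(1+2\beta/(b_r-1)\bigr)d(t)$ already bounds the minimum, so the ``real obstacle'' you flag is not actually needed.
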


\begin{proof}
Let ${\cal S}$ denote the $2^k$-parallel strategy as defined by the set 
$S_0, \ldots S_{2^k-1}$ where
\[
S_j =(b^{j+iq}, i \bmod 2), \ \textrm{if $j$ is even  and } \
S_j =(b^{j+iq}, (i+1) \bmod 2), \ \textrm{if $j$ is odd},
\]
for some $b>1$ that will be specified later. That is, each individual strategy is near-geometric, and half of the searchers explore ray 0 in their first iteration, whereas the other half explore ray 1.  We require that each strategy in ${\cal S}$ is $r$-robust which implies, from the discussion in Section~\ref{sec:preliminaries}, that $b$ must satisfy $b^q \leq b_r$, hence $b\leq b_r^{1/q}$. 

For any given target $t$, let $C_t$ denote the smallest cost at which one of the $2^k$ strategies in ${\cal S}$ discovers $t$, namely $C_t =\min_{j} d(S_j,t)$. 
Consider a target placed at distance $d(t)  \in(b^i,b^{i+1}]$, say on ray 1 (if the target is on ray 0, the argument is symmetric). Let $\tau$ and $l$ be defined such that 
\[
l =\lfloor \frac{i-q+1}{q} \rfloor, \ \textrm{ and } \ \tau= i-q+1 \bmod q,
\]
hence $i-q+1=lq+\tau$. By construction, there is a strategy in ${\cal S}$ which finds $t$ at cost at most
\begin{align*}
2\sum_{m=0}^l b^{\tau+mq}+d(t)&=2\frac{b^{\tau+(l+1)q}-b^\tau}{b^q-1}+d(t) \\
& \leq 2\frac{b^{\tau+(l+1)q}}{b^q-1}+d(t)\\
&=2\frac{b^{i-q+1+q}}{b^q-1}+d(t)\\ &=2\frac{b^{i+1}}{b^n-1}+d(t). 
\end{align*}
Thus
\begin{align*}
\frac{C_t}{d(t)}&=1+2\frac{b^{i+1}}{(b^q-1)d(t)}\\
&\leq 1+2\frac{b^{i+1}}{(b^q-1)b^i}=1+2\frac{b}{b^q-1}.
\end{align*}
Last, note that the function $b/(b^q-1)$ is decreasing in $b$. Choosing $b = b_r^{1/q}$ optimizes the competitive ratio of the strategy and yields the result. 
\end{proof}

Note that this upper bound is not only of theoretical significance, but can be obtained in practice via a query-based implementation. This is because the $i$-th advice bit can be interpreted, equivalently, as a response to a subset query that asks whether the target is hiding within a specific subset of the infinite line. Informally, the theorem shows which questions to ask to $k$ different experts\footnote{Experts may be inherently erroneous; in Theorem~\ref{thm:advice.noisy} we extend the result to account for query errors.} about the whereabouts of the target so as to maximize the efficiency of search, while remaining robust to adversarial responses.

We now move to the lower bound. We first show a useful property of parallel search. To illustrate the property, consider Figure~\ref{fig:snapshot}, which shows the first segments of a $p$-parallel strategy defined by $S_1, \ldots S_p$. In this example, the first segment of strategies $S_1, \ldots S_i$ is to the left ray of the line, whereas the first segment of $S_{i+1}, \ldots ,S_p$ is to the right ray of the line. Furthermore, the lengths of these segments are in increasing order, in the left and the right ray, respectively, as illustrated. We observe that, without loss of generality, a target that hides infinitesimally beyond the first turnpoint in $S_j$, with 
$j \in \{1,\ldots, p-1\}$ is first discovered by $S_{j+1}$, and if $j=p$, it is first discovered by $S_1$. This is because, if this was not the case, then one of the strategies would mark its second turn before it had explored any new parts of the line, which would mean that the corresponding second segment would be redundant and thus could be omitted.

\newcommand{\snap}{

\tikzset{every picture/.style={line width=0.75pt}} 

\begin{tikzpicture}[x=0.75pt,y=0.75pt,yscale=-1,xscale=1]

\draw    (32,56) -- (634,55) ;
\draw  [fill={rgb, 255:red, 0; green, 0; blue, 0 }  ,fill opacity=1 ] (322,56.5) .. controls (322,53.46) and (324.69,51) .. (328,51) .. controls (331.31,51) and (334,53.46) .. (334,56.5) .. controls (334,59.54) and (331.31,62) .. (328,62) .. controls (324.69,62) and (322,59.54) .. (322,56.5) -- cycle ;
\draw    (277,76) -- (325,76) ;
\draw [shift={(328,76)}, rotate = 180] [fill={rgb, 255:red, 0; green, 0; blue, 0 }  ][line width=0.08]  [draw opacity=0] (8.93,-4.29) -- (0,0) -- (8.93,4.29) -- cycle    ;
\draw [shift={(274,76)}, rotate = 0] [fill={rgb, 255:red, 0; green, 0; blue, 0 }  ][line width=0.08]  [draw opacity=0] (10.72,-5.15) -- (0,0) -- (10.72,5.15) -- (7.12,0) -- cycle    ;
\draw    (234,116.98) -- (325,116.52) ;
\draw [shift={(328,116.5)}, rotate = 179.7] [fill={rgb, 255:red, 0; green, 0; blue, 0 }  ][line width=0.08]  [draw opacity=0] (8.93,-4.29) -- (0,0) -- (8.93,4.29) -- cycle    ;
\draw [shift={(231,117)}, rotate = 359.7] [fill={rgb, 255:red, 0; green, 0; blue, 0 }  ][line width=0.08]  [draw opacity=0] (10.72,-5.15) -- (0,0) -- (10.72,5.15) -- (7.12,0) -- cycle    ;
\draw    (186.5,189.5) -- (321,189.5) ;
\draw [shift={(324,189.5)}, rotate = 180] [fill={rgb, 255:red, 0; green, 0; blue, 0 }  ][line width=0.08]  [draw opacity=0] (8.93,-4.29) -- (0,0) -- (8.93,4.29) -- cycle    ;
\draw [shift={(183.5,189.5)}, rotate = 0] [fill={rgb, 255:red, 0; green, 0; blue, 0 }  ][line width=0.08]  [draw opacity=0] (10.72,-5.15) -- (0,0) -- (10.72,5.15) -- (7.12,0) -- cycle    ;
\draw    (331.5,76.46) -- (397,75.54) ;
\draw [shift={(400,75.5)}, rotate = 179.2] [fill={rgb, 255:red, 0; green, 0; blue, 0 }  ][line width=0.08]  [draw opacity=0] (8.93,-4.29) -- (0,0) -- (8.93,4.29) -- cycle    ;
\draw [shift={(328.5,76.5)}, rotate = 359.2] [fill={rgb, 255:red, 0; green, 0; blue, 0 }  ][line width=0.08]  [draw opacity=0] (10.72,-5.15) -- (0,0) -- (10.72,5.15) -- (7.12,0) -- cycle    ;
\draw    (331.5,117) -- (412,117) ;
\draw [shift={(415,117)}, rotate = 180] [fill={rgb, 255:red, 0; green, 0; blue, 0 }  ][line width=0.08]  [draw opacity=0] (8.93,-4.29) -- (0,0) -- (8.93,4.29) -- cycle    ;
\draw [shift={(328.5,117)}, rotate = 0] [fill={rgb, 255:red, 0; green, 0; blue, 0 }  ][line width=0.08]  [draw opacity=0] (10.72,-5.15) -- (0,0) -- (10.72,5.15) -- (7.12,0) -- cycle    ;
\draw    (333.5,165) -- (481,165) ;
\draw [shift={(484,165)}, rotate = 180] [fill={rgb, 255:red, 0; green, 0; blue, 0 }  ][line width=0.08]  [draw opacity=0] (8.93,-4.29) -- (0,0) -- (8.93,4.29) -- cycle    ;
\draw [shift={(330.5,165)}, rotate = 0] [fill={rgb, 255:red, 0; green, 0; blue, 0 }  ][line width=0.08]  [draw opacity=0] (10.72,-5.15) -- (0,0) -- (10.72,5.15) -- (7.12,0) -- cycle    ;

\draw (245,63) node [anchor=north west][inner sep=0.75pt]   [align=left] {$\displaystyle S_{1}$};
\draw (205.5,105) node [anchor=north west][inner sep=0.75pt]   [align=left] {$\displaystyle S_{2}$};
\draw (169.45,187.19) node   [align=left] {$\displaystyle S_{i}$};
\draw (420.45,75.19) node   [align=left] {$\displaystyle S_{i+1}$};
\draw (440.45,115.19) node   [align=left] {$\displaystyle S_{i+2}$};
\draw (500.45,164.69) node   [align=left] {$\displaystyle S_{p}$};
\draw (327.95,28.19) node   [align=left] {$\displaystyle O$};

\end{tikzpicture}

}
\begin{figure}
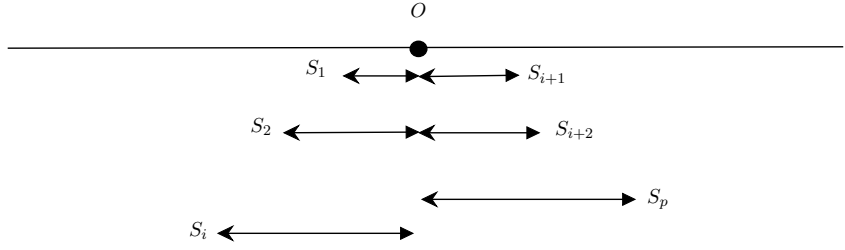

\centering
\scalebox{0.7}{\snap}
\caption{A snapshot of the first iteration in a $p$-parallel strategy.}
\label{fig:snapshot}
\end{figure}

We can argue, inductively, that the same property extends not only to targets hiding infinitesimally beyond the first turn points, but beyond {\em every} turn point. To formalize this concept, let ${\cal S}$ be a $p$-parallel strategy defined by single-searcher strategies $S_1, \ldots ,S_p$. We say that $S_j$ is {\em responsible for the $i$-th turn point of $S_l$} if $S_j$ is the first strategy
in ${\cal S}$ to discover a target hiding infinitesimally beyond the $i$-th turnpoint of $S_l$. 
The following lemma shows that it suffices to consider 
$p$-parallel strategies in which a ``snapshot'' of the first iteration of the $p$ individual strategies
provides a global picture about the relative turnpoints of all individual strategies, for all subsequent iterations. This will help us setup the lower bound. 

\begin{lemma}
For any $p$-searcher strategy ${\cal S}$, there is a $p$-searcher strategy ${\cal S'}=\{S'_1, \ldots ,S'_p\}$ such that there is a bijection $\pi:\{1, \ldots p \} \to \{1, \ldots p\}$ with the property that for any $j \in \{1, \ldots ,p\}$, $S'_{\pi(j)}$ is  responsible for the $i$-th turn point of $S'_j$ for all $i \in \mathbb{N^+}$, and ${\cal S'}$ has competitive ratio no worse than 
${\cal S}$. 
\label{lemma:property.first.order}
\end{lemma}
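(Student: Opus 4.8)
The plan is to bring $\mathcal{S}$ into a canonical ``ladder'' form through a sequence of value-preserving transformations, and to read off $\pi$ from the configuration of the first iteration. \textbf{Normalization.} First I would clean up each individual searcher: merge consecutive excursions that lie on the same ray, which can only lower the cost at which points are reached, and discard every segment that is \emph{redundant} in the global sense, i.e.\ one whose turn point does not lie strictly beyond the deepest point reached by \emph{any} searcher on that ray up to that time. Shortening such a segment to the current frontier only decreases discovery costs, so it cannot increase $\comp(\mathcal{S})$; as a by-product each searcher's lengths become strictly increasing on each ray. After this step every turn point extends the explored frontier of its ray, so a worst-case target (just beyond a turn point at depth $d$) is discovered exactly when the frontier of that ray next advances past $d$. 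Consequently $\comp(\mathcal{S})$ now depends only on the two global \emph{frontier-advancement schedules}, one per ray, i.e.\ on the record depths together with the times at which they are attained.

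\textbf{Defining $\pi$.} I would then fix the snapshot of the first iteration as in Figure~\ref{fig:snapshot}: relabel the searchers so that $S_1,\dots,S_i$ are those whose first segment is on the left ray, in increasing order of length, and $S_{i+1},\dots,S_p$ those on the right ray, likewise ordered. The redundancy observation already recorded in the text shows that the searcher responsible for the first turn point of $S_j$ is its cyclic successor under this labelling, and this defines the candidate bijection $\pi$ as a single $p$-cycle. The aim is now to produce $\mathcal{S}'$ realizing the same two frontier-advancement schedules, but in which, for every $j$ and every iteration, the same searcher $S'_{\pi(j)}$ takes over the frontier immediately after each turn point of $S'_j$.

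\textbf{Reorganization by induction.} I would construct $\mathcal{S}'$ by an exchange argument, inducting on the turn-point index $i$ and maintaining the invariant that the responsibility of all turn points of index at most $i$ already follows $\pi$. For the inductive step I examine the $(i{+}1)$-st turn points: if a turn point of $S_j$ is taken over by a searcher other than the prescribed $S_{\pi(j)}$, I exchange the tails of the two searchers' trajectories, transferring the offending frontier advance to $S_{\pi(j)}$. Because all searchers move at unit speed, such an exchange preserves the multiset of frontier advances, and therefore every worst-case discovery cost and hence $\comp(\mathcal{S})$, \emph{provided} the cut is cost-synchronized. Applying these exchanges in increasing order of depth drives the responsibility structure to agree with $\pi$ at every level and yields $\mathcal{S}'$.

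\textbf{Main obstacle.} The crux is precisely this cost-synchronization: a naive swap of suffixes changes the cumulative distance each searcher has travelled, and thus the times at which the transferred segments reach their turn points, which could worsen the ratio. The argument must therefore cut the two trajectories at matched cumulative-cost boundaries --- interpolating within a segment where no common at-origin boundary exists --- and must exploit the normalization of the first step (increasing, frontier-extending lengths on each ray) to guarantee that the two exchanged tails are comparable and that the relative order of all deeper turn points, hence the invariant for indices at most $i$, is left intact. A secondary technical point is the tie-breaking convention when several searchers reach a depth at equal cost, which I would settle by the fixed cyclic order so that ``responsible'' is well defined and consistent with $\pi$.
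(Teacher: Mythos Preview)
Your normalization step coincides with the paper's, but from there the two arguments diverge. The paper does \emph{not} perform any reorganization or tail-swapping: it argues that once every segment is globally non-redundant (i.e.\ extends the frontier on its ray at the time it is executed), the responsibility bijection determined by the first iteration is \emph{automatically} preserved for all later iterations. The induction is on the iteration index, and the step is the same observation already used for $i=1$: if among the searchers visiting a given ray in iteration $l$ the accumulated costs are ordered $C_{j_1}<\cdots<C_{j_k}$ (which holds by the inductive hypothesis on the segment lengths), then non-redundancy forces their iteration-$l$ depths to be ordered the same way, since a later-arriving searcher that turns no deeper than an earlier one would have explored nothing new and its segment would have been deleted. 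Hence the relative order of lengths, and with it the responsibility pattern $\pi$, is stable across all $i$; no rearrangement is required, and $\mathcal{S}'$ is simply the normalized $\mathcal{S}$.

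Your exchange argument therefore attacks a difficulty the paper never encounters, and it leaves open precisely the obstacle you yourself flag: swapping suffixes of two unit-speed trajectories alters the cumulative cost at which every subsequent turn point is reached, and the proposed remedy (cost-synchronized cuts, interpolating inside a segment when no common at-origin boundary exists) is stated but not carried out. You would additionally have to verify that the interpolated fragments remain frontier-extending and that responsibilities already fixed at indices $\le i$ are not perturbed by the swap. None of this is needed on the paper's route. The cleanest repair to your plan is to drop the reorganization step entirely and instead prove, by the induction sketched above, that after your normalization the bijection $\pi$ read off from the first-iteration snapshot already governs every iteration.
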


\begin{proof}
We prove the statement by induction on the time of {\em configurations}. Here, a configuration at time $i$ consists of the pairs $(x_{j,i}, u_{j,i})$, where $x_{j,i}$ is the length of the $i$-th segment of $S_j'$, and $u_{j,i}$ the ray on which the $i$-th iteration takes place. We will prove a stronger statement, namely that for all $i$, the relative ordering of the lengths $(x_{j,i})$ is the same, for all $j$ such that $S_j'$ search the same ray at iteration $i$. 
The statement holds for $i=1$, as explained earlier. Suppose that the statement is true for $i=l$, and that there is a bijection $\pi$ with the desired property. We can then apply the same argument on the configuration at time $l+1$: this is because if $S'_j$ is responsible for the $l$-th turnpoint of $S'_{\pi(j)}$, then it will be the first to explore a point infinitesimaly beyond the $(l+1)$-th turnpoint, otherwise $S_j$ would not explore any new parts of the line in iteration $l+1$, a contradiction (since we can assume, without loss of generality, that every strategy satisfies such property).
\end{proof}

We now show how to prove the lower bound.
\begin{theorem}
For any $r\geq 9$, every $r$-robust strategy for searching on the line with untrusted advice 
of size $k$ has consistency at most $1+2\frac{b_r^{1/q}}{b_r-1}$, where $q=2^{k-1}$.
\label{thm:advice.lower}
\end{theorem}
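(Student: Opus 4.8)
The plan is to reduce the problem to a parallel-search lower bound and then apply Gal's functional theorem (Theorem~\ref{thm:sch}) to a single \emph{global} sequence built by interleaving the turn points of all searchers. By the observation in Section~\ref{sec:preliminaries}, the best consistency of an $r$-robust strategy with $k$ advice bits equals the competitive ratio of the best $p$-parallel strategy $\mathcal{S}$ with $p=2^k$ individually $r$-robust searchers; hence it suffices to prove that every such $\mathcal{S}$ satisfies $\comp(\mathcal{S})\ge 1+2\frac{b_r^{1/q}}{b_r-1}$, which together with Theorem~\ref{thm:upper.advice} establishes tightness. First I would apply Lemma~\ref{lemma:property.first.order} to assume the first-order structure: a fixed bijection $\pi$ and a cyclic order of turn points that persist over all iterations, so that the searcher responsible for a target infinitesimally beyond any turn point is fixed once and for all. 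I would then restrict attention to the searchers forming a single cycle of $\pi$, of length $c\le p$; since targets beyond their turn points are caught within the cycle, $\comp(\mathcal{S})$ is at least the competitive ratio of this sub-strategy.

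The heart of the proof is an explicit labeling scheme. I would sort the turn points of the $c$ searchers in this cycle by length into a single sequence $X=(x_0,x_1,\ldots)$. Because each searcher is cyclic and the order of turn points is fixed, the turn points of any one searcher appear in $X$ at indices forming an arithmetic progression of common difference $c$. Consequently, for a target just beyond $x_n$, the responsible searcher reaches it only after completing all of its earlier segments, whose lengths are precisely the entries $x_{n_0(n)-mc}$ for $m\ge 0$, where $n_0(n)=n+O(1)$ is an affine offset. This yields $\comp(\mathcal{S})\ge \sup_n F_n(X)$ for the functionals
\[
F_n(X)=1+\frac{2\sum_{m\ge 0}x_{n_0(n)-mc}}{x_n}.
\]

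Next I would verify the five hypotheses of Theorem~\ref{thm:sch}. Finite dependence and continuity are immediate since $n_0(n)-n$ is bounded; scale-invariance holds as $F_n$ is homogeneous of degree $0$; the inequality $F_n(X+Y)\le\max(F_n(X),F_n(Y))$ is the weighted-mediant inequality for ratios of sums; and the shift property $F_{i+k}(X)\ge F_i(X^{+k})$ reduces to an equality thanks to the affine form of $n_0$. Gal's theorem then gives $\sup_n F_n(X)\ge \sup_n F_n(G_{\alpha_X})$, and a direct computation on the geometric sequence yields $\sup_n F_n(G_\alpha)=1+2\frac{\alpha^{2}}{\alpha^{c}-1}$, which is decreasing in $\alpha$. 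Finally, $r$-robustness gives $\alpha_{S_j}\le b_r$ for each searcher, and since $X$ interleaves $c$ such sequences, $\alpha_X\le b_r^{1/c}$; plugging in and using $c\le p=2q$ with monotonicity gives $\comp(\mathcal{S})\ge 1+2\frac{b_r^{2/c}}{b_r-1}\ge 1+2\frac{b_r^{1/q}}{b_r-1}$, as desired.

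The step I expect to be the main obstacle is the labeling scheme itself: extracting from Lemma~\ref{lemma:property.first.order} the precise statement that the completed segments of the responsible searcher map to an arithmetic progression of global indices with an affine offset (so that hypothesis~5 of Gal's theorem holds), and correctly tracking how the two rays contribute to $n_0(n)$ — it is exactly this ``$\alpha^2$'' in the numerator, rather than $\alpha^1$, that makes the bound tight. The decomposition of $\pi$ into several cycles is a secondary point, and as the final chain of inequalities shows, shorter cycles only strengthen the bound and hence do not threaten tightness.
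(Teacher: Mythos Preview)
Your overall architecture matches the paper's: reduce to $2^k$-parallel search, assemble one global sequence from all searchers' segment lengths, invoke Gal's theorem, and bound $\alpha$ via individual $r$-robustness. The cycle decomposition is an unnecessary refinement (in the paper's setup Lemma~\ref{lemma:property.first.order} already produces a single $n$-cycle), but it is harmless and your final monotonicity step in $c$ is correct.

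The genuine gap is exactly where you flagged it. Your functionals have the form
\[
F_n(X)=1+\frac{2\sum_{m\ge 0}x_{n_0(n)-mc}}{x_n},
\]
where $n_0(n)-n$ depends on the residue $n\bmod c$: which searcher is responsible, and hence which arithmetic progression supplies the numerator, changes with that residue. Thus $n_0$ is only \emph{piecewise} affine, and hypothesis~(5) of Theorem~\ref{thm:sch} fails for shifts $k\not\equiv 0\pmod c$: the quantities $F_{n+1}(X)$ and $F_n(X^{+1})$ compare sums taken over \emph{different} progressions, and no inequality between them holds for general $X$. Relatedly, sorting the turn points is not justified by Lemma~\ref{lemma:property.first.order} alone to force each searcher's lengths into a fixed arithmetic progression of global indices; the lemma only pins down the relative order of the $i$-th turn points across searchers on a given ray, not how the $(i{+}1)$-th turn point of one searcher compares to the $i$-th of another.

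The paper sidesteps both issues simultaneously, and this is the step your plan is missing. Rather than sorting, it imposes an explicit (non-monotone) labeling~\eqref{eq:labeling}; and rather than keeping $n$ separate progression-type functionals, it first combines the $n$ per-searcher lower bounds \eqref{eq:lower.1st.A}--\eqref{eq:lower.1st.B} via the mediant inequality $\max\{a/b,\,c/d\}\ge (a+c)/(b+d)$, arriving at a single functional
\[
F_m=\frac{\sum_{l=1}^{mn}x_l}{\sum_{l=0}^{n-1}x_{mn-1+l}}
\]
whose numerator and denominator range over \emph{contiguous} index blocks. That contiguity is what makes all five hypotheses of Gal's theorem go through; evaluating on $G_\alpha$ then yields $1+2\alpha^2/(\alpha^n-1)$, and the same labeling ensures $\alpha\le b_r^{1/n}$. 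If you replace your direct application of Gal to the arithmetic-progression functionals by this mediant-and-relabel step, your argument becomes the paper's.
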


\begin{proof}
For convenience of notation, let $n=2^k$, and let ${\cal S}$ be an $n$-parallel strategy defined by 
$S_1, \ldots S_{n}$, and which satisfies the property of Lemma~\ref{lemma:property.first.order}. Let $i$ be such that strategies $S_1, \ldots S_i$ start their first iteration to the left (in increasing order of this length), whereas $S_{i+1}, \ldots S_{n}$ start their first iteration to the right (again in increasing length order). Thus, we have that $S_j$ is responsible for $S_{j-1}$, for all $j \in \{2, \ldots ,2^k\}$, whereas $S_1$ is responsible for $S_n$. The situation is depicted in Figure~\ref{fig:snapshot}, where $p=n$. 

Let $s_{j,m}$ denote the search length of the $m$-th iteration of $S_j$. For any fixed $m$, consider a target hiding infinitesimally beyond the $m$-th turn point of $S_j$, for each $j$. 
Since $S_{j+1}$ is responsible for $S_j$, for all $j \in \{1, \ldots i-1, i+1, \ldots n-1 \}$, we have
\begin{equation}
\comp({\cal S}) \geq 1+2\frac{\sum_{l=1}^{m-1} s_{j+1,l}}{s_{j,m}},
\ \text{ for all $j \in \{1, \ldots i-1, i+1, \ldots n-1 \}$}. 
\label{eq:lower.1st.A}
\end{equation}
In addition, since $S_{i+1}$ and $S_1$ are responsible for $S_i$ and $S_{n}$, respectively, we have that
\begin{equation}
\comp({\cal S}) \geq 1+2\frac{\sum_{l=1}^{m-1}s_{i+1,l}}{s_{i,m-1}} \quad \text{ and} \  
\comp({\cal S}) \geq 1+2\frac{\sum_{l=1}^{m-1}s_{1,l}}{s_{n,m-1}},
\label{eq:lower.1st.B}
\end{equation}
and note the subtle, but important differences in the indexing of the denominators between~\eqref{eq:lower.1st.A} and~\eqref{eq:lower.1st.B}. This motivates our next step, in which we  {\em label} the lengths of all search segments of the $2^k$ strategies in ${\cal S}$ in a way that will allow us to use the above lower bounds.
Let $\{x_l\}_{l=1}^\infty$ denote the set of all segment lengths in the parallel strategy ${\cal S}$. We map bijectively each such length to a segment length of one of the strategies $S_j$, according to the following function.
\begin{equation}
s_{j,m}=
\begin{cases}
x_{j+mn}, &\text{if $j \in [1,\ldots i-1]$} \\
x_{j-1+mn} & \text{if $j \in [i+1,\ldots n-1]$}\\
x_{n-1+mn}, &\text{if $j=i$} \\
x_{n+mn}, &\text{if $j=n$}.
\end{cases}
\label{eq:labeling}
\end{equation}
Using the property $\max\{\frac{a}{b},\frac{c}{d}\} \geq \frac{a+b}{c+d}$, we obtain 
from~\eqref{eq:lower.1st.A} and~\eqref{eq:lower.1st.B}:
\begin{equation}
\comp({\cal S}) \geq 1+2\frac{\sum_{j=1}^{i-1}\sum_{l=1}^{m-1} s_{j+1,l}+\sum_{l=1}^{m-1}s_{i+1,l}+
\sum_{j=i+1}^{n-1} \sum_{l=1}^{m-1} s_{j+1,l}+ \sum_{l=1}^{m-1}s_{1,l}}
{\sum_{j=1}^{i-1}s_{j,m}+s_{i,m-1}+\sum_{j=i+1}^{n-1}s_{j,m}+s_{n,m-1}}.
\label{eq:lower.1st.C}
\end{equation}
Using the labeling of~\eqref{eq:labeling}, the numerator in~\eqref{eq:lower.1st.C} can be written equivalently as
\[
\sum_{j=1}^{i-2} \sum_{l=1}^{m-1} x_{j+1+ln} +
\sum_{l=1}^{m-1} x_{n-1+ln} +
\sum_{l=1}^{m-1} x_{i+ln} +
\sum_{j=i+1}^{n-2} \sum_{l=1}^{m-1} x_{j+ln} +
\sum_{l=1}^{m-1} x_{n+ln},
\]
which, after gathering terms, is equal to $\sum_{l=1}^{mn}$. Moreover, 
the denominator in~\eqref{eq:lower.1st.C} is equivalently  
\[
\sum_{j=1}^{i-1} x_{j+mn} + x_{n-1+(m-1)n}+ \sum_{j=i+1}^{n-1} x_{j-1+mn}+x_{n+(m-1)n}=
\sum_{l=0}^{n-1}x_{mn-1+l}.
\] 
We thus obtain
\begin{equation}
\comp({\cal S}) \geq 1+2\sup_m \frac{\sum_{l=1}^{mn} x_l}{\sum_{l=0}^{n-1}x_{mn-1+l}}.
\label{eq:upper.advice.nice}
\end{equation}
Define the functional $F_m$ to be $F_m=\frac{\sum_{l=1}^{mn} x_l}{\sum_{l=0}^{n-1}x_{mn-1+l}}$. It is easy to see that this functional satisfies the conditions of Gal's Theorem~\cite{Gal80}. Therefore,
\begin{equation}
\comp({\cal S}) \geq 1+2\sup_m \frac{\sum_{l=1}^{mn} \alpha^l}{\sum_{l=0}^{n-1}\alpha^{mn-1+l}},
\ \textrm{ where $\alpha=\limsup_{l \to \infty} x_l^{1/l}$}.
\label{eq:lower.1st.D}
\end{equation}
If $\alpha\leq 1$, the RHS of~\eqref{eq:lower.1st.D} is unbounded. If $\alpha>1$,~\eqref{eq:lower.1st.D} gives
\begin{equation}
\comp({\cal S}) \geq 1+2 \sup_m \frac{\alpha^{mn+1}-\alpha}{(\alpha-1)\alpha^{mn-1}\frac{\alpha^n-1}{\alpha-1}}\geq 1+2\frac{\alpha^2}{\alpha^n-1}.
\label{eq:last.advice}
\end{equation}
We will now use the fact that each strategy in ${\cal S}$ is individually $r$-robust, in order to bound $\alpha$ from below, and thus $\comp({\cal S})$ as well. From~\cite{jaillet:online} we know that any $r$-competitive single searcher strategy of the form $Y=(y_j)_{j=1}^\infty$ satisfies 
$y_j= O(b_r^j)$. Given the labeling scheme~\eqref{eq:labeling}, it follows that $x_j=O(b_r^{j/n})$,
hence from the definition of $\alpha$, $\alpha \leq b_r^{1/n}$. Therefore,~\eqref{eq:last.advice} gives
\[
\comp({\cal S}) \geq 1+2\frac{b_r^{2/n}}{b_r-1} = 1+2\frac{b_r^{1/q}}{b_r-1}, 
\]
which completes the proof.
\end{proof}

We give some intuition behind the significance of the mapping~\eqref{eq:labeling} in the proof. The labeling accomplishes two goals: First, it leads to~\eqref{eq:upper.advice.nice}, whose sums in the numerator and the denominator contain summands with ``contiguous'' indices: this is an essential requirement for the application of Gal's theorem. Second, it implies that each strategy $S_j$
is of the form $(x_{\pi(j)+nl})_{l=0}^\infty$, where $\pi$ is a bijection over $\{1, \ldots ,n\}$, which allows us to argue that $\alpha \leq b_r^{1/n}$.

\subsection{Extension to weak predictions}
\label{subsec:advice.noisy}

We show how to extend the upper bound to incorporate weak predictions. In this setting, as discussed in Section~\ref{sec:introduction}, given advice of size $k$, and robustness requirement $r\geq 9$, the searcher specifies a tolerance parameter $H\leq k/2$. The objective is to obtain an $r$-robust strategy of minimum competitive ratio assuming that at most $H$ advice bits are erroneous.   

To address this problem, we will make use of a result by Rivest {\em et al.}~\cite{RivestMKWS80}, who
studied games with a {\em lying responder}. In their setting, given $k \in \mathbb{N}^+$, 
$H\leq k/2$, and a domain ${\cal D}=\{1,\ldots,m\}$, the objective is to find the index of an unknown $x \in {\cal D}$, using $k$ queries, of which up to $H$ may receive incorrect responses. A query can be a comparison query of the form ``Is $x\leq M$?'', for some given $M \in [1,m]$, or more generally, a subset query of the form ``Is $x$ in $S$?'', where $S$ is a subset of the domain ${\cal D}$. Define the sum of binomial coefficients
$
\multiset{N}{m}   := \sum_{j=0}^m {N \choose j}, \  \textrm{for $m \leq N$.}
$
In \cite{RivestMKWS80} it was shown that as long as $m \leq 2^k/\multiset{k}{H}$, $k$ comparison queries suffice to find $x$ in the above game, in the presence of at most $H\leq k/2$ errors. This leads to the following extension of Theorem~\ref{thm:upper.advice}.

\begin{theorem}
For any $r\geq 9$, and any $H \leq k/2$, there is an $r$-robust strategy for searching on the line with $k$-bit advice that has competitive ratio at most $1+2\frac{b_r^{1/q}}{b_r-1}$, where $q={2^{k-1}}/{\multiset{k}{H}}$.
\label{thm:advice.noisy}
\end{theorem}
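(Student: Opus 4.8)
The plan is to mirror the construction of Theorem~\ref{thm:upper.advice}, but to pay for the possibility of up to $H$ corrupted advice bits by shrinking the number of parallel searchers that the advice must reliably distinguish. Concretely, I would define a $p$-parallel strategy $\mathcal{S}=\{S_0,\ldots,S_{p-1}\}$ of exactly the same near-geometric form as before, $S_j=(b^{j+iq},\cdot)$ with half the searchers starting on ray $0$ and half on ray $1$, except that now $p=2^k/\multiset{k}{H}$ and correspondingly $q=p/2=2^{k-1}/\multiset{k}{H}$ (taking $p$ to be the largest admissible integer when this quantity is not integral). As in Theorem~\ref{thm:upper.advice}, requiring each individual $S_j$ to be $r$-robust forces $b^q\leq b_r$, i.e. $b\leq b_r^{1/q}$, and the cost computation carries over verbatim: the best of the $p$ searchers locates any target at cost ratio at most $1+2\frac{b}{b^q-1}$, optimized at $b=b_r^{1/q}$ to yield $1+2\frac{b_r^{1/q}}{b_r-1}$ with the new value of $q$.

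The crux is to argue that $k$ advice bits, up to $H$ of which may be wrong, still suffice to identify the optimal searcher among the $p$ candidates. I would cast this identification as an instance of the lying-responder game of Rivest~{\em et al.}~\cite{RivestMKWS80}. For a target $t$, let $\phi(t)\in\{0,\ldots,p-1\}$ denote the index of the searcher in $\mathcal{S}$ that reaches $t$ first; by the periodic structure of the construction, $\phi$ ranges over at most $p$ distinct values, so recovering $\phi(t)$ is a search problem over a domain of size at most $p$. Each of the $k$ comparison queries used by Rivest~{\em et al.} on the ordered domain $\{0,\ldots,p-1\}$ can be realized as a \emph{subset query} on the line, which the advice model permits: the query ``is $\phi(t)\leq M$?'' is the query ``is $t$ in $\{t':\phi(t')\leq M\}$?'', a legitimate (if disconnected) subset of the line. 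Since we arranged $p\leq 2^k/\multiset{k}{H}$, the theorem of Rivest~{\em et al.} guarantees that the $k$ responses, even with up to $H$ errors, uniquely determine $\phi(t)$, so the searcher executes the truly optimal $S_{\phi(t)}$ and attains the claimed competitive ratio.

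Robustness is then immediate and independent of the number of errors: whatever the (possibly fully adversarial) advice string decodes to, the strategy runs one of the $S_j$, each of which was built to be individually $r$-robust through $b^q\leq b_r$; hence the overall strategy is $r$-robust. I expect the main obstacle to lie in the second step, namely verifying precisely that the optimal-searcher map $\phi$ has range of size at most $2^k/\multiset{k}{H}$ and that the comparison-query protocol of Rivest~{\em et al.} can be faithfully simulated by subset queries on the line, including absorbing the two ray directions into a single query space and handling the minor integrality rounding of $p$ and $q$.
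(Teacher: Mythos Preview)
Your proposal is correct and follows essentially the same approach as the paper: build the $2q$-searcher family from Theorem~\ref{thm:upper.advice} with $q=2^{k-1}/\multiset{k}{H}$, invoke Rivest~\emph{et al.}\ to select the right searcher from a domain of size $2q\leq 2^k/\multiset{k}{H}$ despite up to $H$ erroneous bits, and inherit the cost and robustness bounds verbatim. Your explicit description of the map $\phi$ and of how the comparison queries on $\{0,\ldots,p-1\}$ are realized as subset queries on the line is a useful elaboration, but it is exactly the mechanism the paper has in mind (and alludes to after Theorem~\ref{thm:upper.advice}); the integrality caveat you flag is likewise a minor technicality that does not affect the argument.
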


\begin{proof}
Define a $(2q)$-searcher parallel strategy ${\cal S}$, determined by the set $\{S_0, S_1, \ldots S_{2q-1}\}$, where each $S_j$ is as in the proof of Theorem~\ref{thm:upper.advice}. Given ${\cal S}$, the advice dictates which strategy to choose. More precisely, as long as $2q \leq \frac{2^{k}}{\multiset{k}{H}}$, the result of~\cite{RivestMKWS80} guarantees that this is indeed possible.
The proof proceeds along the lines of Theorem~\ref{thm:upper.advice}, by setting $q=\frac{2^{k-1}}{\multiset{k}{H}}$.
\end{proof}

We can further show that
$
\frac{1}{q} \leq \frac{1}{2^{k(1-{\cal H}(\frac{H}{k}))-1}},
$  
where ${\cal H}$ denotes the {\em binary entropy} function. This allows for a direct comparison to the Pareto-optimal upper bound of Theorem~\ref{thm:upper.advice}. In particular, we observe that as $k$ increases, the effect of the advice error in the competitive ratio becomes marginal, even if $H$ is as high as linear in $k$. To see this, we need to be able to evaluate the partial sum of binomial coefficients. Since this partial sum does not have a closed form, we will rely on the following useful approximation from~\cite{macwilliams1977theory}. We have

\begin{equation}
\frac{2^{N{\cal H}(\frac{m}{N})}}{\sqrt{8m(1-\frac{m}{N})}}
\leq \mch{N}{m} \leq 2^{N{\cal H}(\frac{m}{N})}, \quad \textrm{for $0<m<N/2$.} 
\label{eq:amazing}
\end{equation}

Using the above approximation approximation, we have
\begin{align*}
\frac{1}{q}=\frac{\multiset{k}{H}}{2^{k-1}} \leq 
\frac{2^{k{\cal H}(\frac{H}{k})}}
{2^{k-1}}=2^{k({\cal H}(\frac{H}{k})-1)+1}.
\end{align*}

As in the case of Pareto-based analysis, we can interpret the advice as the output of $k$ binary experts, where each expert predicts whether the target lies in an appropriate partition of the infinite line, and up to $H$ experts may give erroneous responses.

\section{Ray search with directional prediction}
\label{sec:direction}

In this section, we study $m$-ray search in the setting in which the prediction is the ray of the hiding target. Without loss of generality, we suppose that the prediction is the ray indexed 0. 

For the upper bound, we consider the following strategy that generalizes the biased search approach of~\cite{demaine:turn}. The searcher fixes some $b>1$ and $\delta>1$, to be specified later, and explores the rays in the cyclic order $0,1, \ldots ,m-1$. If the ray visited in the $i$-th iteration is ray 0, it explores it to a length equal to $\delta b^i$, otherwise, i.e., if the visited ray is in $\{1, \ldots ,m-1\}$, it explores it to a length equal to $b^i$. Thus the search combines elements of geometric search with a bias towards the predicted ray, as expressed by the parameter $\delta$.

\begin{theorem}
For every $b>1$, $\delta>1$, the strategy described above has consistency at most
$
1+2\frac{b^m}{b^m-1}+\frac{2}{\delta} \frac{b^m}{b^m-1} \frac{b^m-b}{b-1},
$
and robustness at most 
$
1+2\delta \frac{b^{m+1}}{b^m-1}+2\frac{b^{m+1}}{b^m-1}\frac{b^m-b}{b-1}-2b^m.
$
\label{thm:upper.2}
\end{theorem}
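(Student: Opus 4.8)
The plan is to exploit the standard fact, recorded around~\eqref{eq:comp.cyclic}, that for a strategy sweeping the rays cyclically the worst-case hiding positions lie infinitesimally beyond the turn points. The cost of reaching a target just past the turn point of iteration $i$ is $2\sum_{j=0}^{i+m-1}x_j + x_i$ (the searcher completes the round trips of iterations $0,\ldots,i+m-1$ and reaches the target on its next visit to ray $u_i$, namely in iteration $i+m$), so every turn point contributes a ratio $1+2\,\frac{\sum_{j=0}^{i+m-1}x_j}{x_i}$. Consistency and robustness are then suprema of this quantity over, respectively, the turn points on the predicted ray and over all turn points. The first concrete step is to evaluate $\sum_{j=0}^{i+m-1}x_j$ for the biased strategy. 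Writing $x_j=b^j+(\delta-1)b^j[\,j\equiv 0\ (\mathrm{mod}\ m)\,]$, the sum splits into a pure geometric part $\frac{b^{i+m}-1}{b-1}$ and a correction $(\delta-1)\sum_{l:\,lm\le i+m-1}b^{lm}$ collecting the $\delta$-scaled segments on ray $0$; the latter is a geometric series in $b^m$ equal to a multiple of $\frac{b^{(\lfloor (i+m-1)/m\rfloor+1)m}-1}{b^m-1}$.

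For consistency I would restrict to the turn points on the predicted ray, i.e. $i=km$, where the reached length is $x_i=\delta b^{km}$ and the correction runs over $l=0,\ldots,k$. Letting $k\to\infty$ makes the additive ``$-1$'' terms negligible, the supremum is attained in the limit, and dividing by $\delta b^{km}$ gives $1+\frac{2b^m}{\delta(b-1)}+\frac{2(\delta-1)b^m}{\delta(b^m-1)}$. Using the identity $\frac{1}{b-1}-\frac{1}{b^m-1}=\frac{b^m-b}{(b-1)(b^m-1)}$ this rewrites into the advertised $1+2\frac{b^m}{b^m-1}+\frac{2}{\delta}\frac{b^m}{b^m-1}\frac{b^m-b}{b-1}$.

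For robustness I would take the supremum over all turn points, parametrising $i=km+u$ with $u\in\{0,\ldots,m-1\}$, so that $x_i=b^{km+u}$ for $u\ge1$ while $u=0$ reproduces the consistency value. Sending $k\to\infty$ again removes the constant terms and reduces each residue class to the clean form
\[
1+\frac{2b^m}{b-1}+\frac{2(\delta-1)\,b^{\,2m-u}}{b^m-1}.
\]
The key structural observation is that the window of a ray-$u$ turn point always contains exactly one large $\delta$-scaled segment of length about $\delta b^{(k+1)m}$, whose ratio to the turn-point length $b^{km+u}$ scales like $\delta b^{\,2m-u}$; the robustness is the maximum of the displayed expression over $u$. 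To match the statement one then applies the reshaping identities $2\delta\frac{b^{m+1}}{b^m-1}+2\frac{b^{m+1}}{b^m-1}\frac{b^m-b}{b-1}=\frac{2(\delta-1)b^{m+1}}{b^m-1}+\frac{2b^{m+1}}{b-1}$ together with $\frac{2b^{m+1}}{b-1}-2b^m=\frac{2b^m}{b-1}$, which convert the advertised closed form into one of the per-residue expressions above.

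The main obstacle is precisely this robustness maximisation and the accompanying geometric bookkeeping. Unlike the unbiased cyclic case, the extra $(\delta-1)$ term makes each residue class $u$ genuinely distinct, so one must pin down the true worst off-prediction ray---the one whose turn-point length is smallest relative to the subsequent boosted ray-$0$ segment---rather than invoke symmetry, and one must confirm that the supremum over $k$ is attained as $k\to\infty$ (the finite-$k$ ratios approaching the limit from below). Once the extremal residue is fixed, reconciling the resulting exponent in $b^{\,2m-u}$ with the exponent appearing in the stated bound is the delicate final check: the two presentations differ in how the $\delta$-scaled segment is charged, and the algebra, though elementary, must be carried out carefully to ensure the claimed closed form is reproduced exactly.
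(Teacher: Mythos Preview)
Your consistency computation is correct and is essentially the paper's calculation in different bookkeeping: the paper groups the segments into rounds via $x_i=\delta b^{im}$ and $y_i=\sum_{j=1}^{m-1}b^{im+j}$ and sums those, while you peel off the $(\delta-1)$ correction from a pure geometric sum; both give the stated consistency exactly after your identity $\frac{1}{b-1}-\frac{1}{b^m-1}=\frac{b^m-b}{(b-1)(b^m-1)}$.

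For the robustness, however, your plan will not reproduce the stated bound, and the reason is not a flaw in your method. Your per-residue limit
\[
1+\frac{2b^m}{b-1}+\frac{2(\delta-1)\,b^{\,2m-u}}{b^m-1},\qquad u\in\{1,\dots,m-1\},
\]
is correct, and since $\delta>1$ it is \emph{maximised} at $u=1$ (the ray visited immediately after the boosted ray~$0$), giving exponent $2m-1$. Your reshaping identities show that the theorem's closed form equals the value at $u=m-1$, exponent $m+1$. These coincide only for $m=2$; for $m>2$ the true supremum is strictly larger. A quick sanity check with $m=3$, $b=2$, $\delta=10$: a target just past $b^{km+1}$ on ray~$1$ gives asymptotic ratio $695/7\approx 99.3$, whereas the stated bound evaluates to $407/7\approx 58.1$.

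The paper's own proof only evaluates robustness on ray $m-1$ (``targets hiding on ray $m-1$ at distances infinitesimally further than $b^{im-1}$'') and never argues that this ray is extremal; that is why it matches the statement. So the ``delicate final check'' you flag will in fact fail: the extremal residue is $u=1$, and your correctly derived robustness has $b^{2m-1}$ where the statement has $b^{m+1}$. Your analysis is the sound one; what it uncovers is that the robustness claim in the theorem is too small for $m>2$ (though only by a bounded factor in the later asymptotic applications where $b\to 1$).
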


\begin{proof}
The consistency of the strategy is evaluated for targets hiding on ray 0 and at distances infinitesimally larger than $\delta b^{im}$. The robustness of the strategy, on the other hand, is evaluated for targets hiding on ray $m-1$ and at distances infinitesimally further than $b^{im-1}$. 

Define $x_i=\delta b^{im}$, for all $i \geq 0$ and 
$y_i=\sum_{j=1}^{m-1} b^{im+j}=b^{im+1}\frac{b^{m-1}-1}{b-1}$, that is, $y_i$ is the aggregate length explored in phase $i$, on rays $1, \ldots ,m-1$ . Then we have 
\begin{align*}
\textrm{consistency} &=\sup_i \frac{x_{i}+2(\sum_{j=0}^i x_j+\sum_{j=0}^iy_j)}{x_{i}} \\
&= \sup_i (1+2\frac{b^m-1/b^{im}}{b^m-1}
+\frac{2}{\delta} 
\frac{b^m-b}{b-1} 
\frac{b^m}{b^m-1}) \\
&=1+2\frac{b^m}{b^m-1}+\frac{2}{\delta}\frac{b^m-b}{b-1}\frac{b^m}{b^m-1}.
\end{align*}
Moreover,
\begin{align*}
\textrm{robustness} &= \sup_i \frac{b^{im-1}+2\sum_{j=0}^{i+1} x_j+2\sum_{j=0}^{i+1} y_{j}-b^{im-1}}{b^{im-1}} \\
&= \sup_i \frac{b^{im-1}+2\sum_{j=0}^{i+1} \delta b^j +2\sum_{j=0}^{i+1} b^{jm+1}\frac{b^{m-1}-1}{b-1}
-b^{im-1}} {b^{im-1}},
\end{align*}
which after some calculations yields the result.
\end{proof}

Observe that if $\delta=1$, then both the consistency and the robustness of the above strategy are equal to $1+2\frac{b^m}{b-1}$, as expected (i.e., the competitive ratio of a geometric strategy with base $b$). For any fixed $b$, by increasing $\delta$, the consistency of the resulting strategy improves, at the expense of its robustness. We would like thus to optimize the robustness by choosing $\delta$ as a function of $b$, $m$ and the desired consistency, however it is not obvious that this is possible analytically. Instead, suppose that we choose $b=(m+1)/m$, namely the base of the geometric strategy that results in an optimal competitive ratio for $m$-ray search equal to $r_m^*=1+2\rho_m^*$. Then $\frac{b^m}{b^m-1} \leq e/(e-1)$, which implies that 
the consistency of the strategy is at most
$
1+2\frac{e}{e-1}+\frac{2}{\delta}\frac{e}{e-1}(\rho_m^*-m)=
\frac{2}{\delta}\frac{e}{e-1}(\rho_m^*-m)+O(1).
$
On the other hand, the robustness of the strategy is at most 
$
1+2\delta\frac{81}{36}+2\frac{81}{36}(\rho_m^*-m)-4= 
\frac{9}{2} (\delta+\rho_m^*-m)+O(1).
$

Therefore, if we would like the strategy to be $c$-consistent, where $c=O(1)+2\tilde{c}$, for some $\tilde{c}$, we can choose $\delta$ to be equal to $\frac{e}{(e-1)\tilde{c}}(\rho_m^*-m)$, and the resulting robustness 
is then at most
$
\frac{9}{2}(\rho_m^*-m)(1+\frac{e}{(e-1)\tilde{c}})+O(1).
$
We can also obtain more precise tradeoffs as $m \to \infty$, since in this case, it is known that
$r_m^*=1+2\rho_m^*=1+2em$. 
\begin{corollary}
For $m \to \infty$, the above strategy has consistency at most $1+2\frac{e}{e-1}+\frac{2}{\delta}em$,
and robustness at most $2e(m+\frac{\delta}{e-1})$. In particular, given $\tilde{c}$, the strategy is 
$(O(1)+2\tilde{c})$-consistent, and $(O(1)+2em(1+\frac{e}{(e-1)\tilde{c}})$-robust. 
\label{cor:infty}
\end{corollary}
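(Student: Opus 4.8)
The plan is to derive Corollary~\ref{cor:infty} directly from the consistency and robustness expressions of Theorem~\ref{thm:upper.2}, by specializing to the base $b=(m+1)/m$ and then passing to the limit $m\to\infty$. First I would record the asymptotic facts I intend to use: since $b-1=1/m$ and $b^m=(1+1/m)^m$, we have $b^m\to e$ and $b^{m+1}=b\cdot b^m\to e$ as $m\to\infty$, whence $\frac{b^m}{b^m-1}\to\frac{e}{e-1}$ and likewise $\frac{b^{m+1}}{b^m-1}\to\frac{e}{e-1}$. I would also invoke, as stated in the preamble to the corollary, that $\rho_m^*\to em$, equivalently $r_m^*=1+2\rho_m^*\to 1+2em$.

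For the consistency, I would substitute $b=(m+1)/m$ into the bound $1+2\frac{b^m}{b^m-1}+\frac{2}{\delta}\frac{b^m-b}{b-1}\frac{b^m}{b^m-1}$ of Theorem~\ref{thm:upper.2}. The first two terms tend to $1+2\frac{e}{e-1}$. The key observation—and the one point requiring care—is the behaviour of the factor $\frac{b^m-b}{b-1}$: although its numerator tends to the finite constant $e-1$, its denominator $b-1=1/m$ vanishes, so $\frac{b^m-b}{b-1}=m(b^m-b)\sim(e-1)m$ grows linearly in $m$. Multiplying by $\frac{2}{\delta}\cdot\frac{e}{e-1}$ gives $\frac{2em}{\delta}$, so the consistency tends to $1+2\frac{e}{e-1}+\frac{2em}{\delta}$, as claimed.

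For the robustness I would treat the bound $1+2\delta\frac{b^{m+1}}{b^m-1}+2\frac{b^{m+1}}{b^m-1}\frac{b^m-b}{b-1}-2b^m$ term by term. Here $2\delta\frac{b^{m+1}}{b^m-1}\to\frac{2e\delta}{e-1}$; the middle term $2\frac{b^{m+1}}{b^m-1}\frac{b^m-b}{b-1}\to 2\cdot\frac{e}{e-1}\cdot(e-1)m=2em$ by the same linear-growth computation as above; and $-2b^m\to-2e$. Collecting these yields $2em+\frac{2e\delta}{e-1}+(1-2e)$, and since $1-2e<0$ this is at most $2em+\frac{2e\delta}{e-1}=2e\bigl(m+\frac{\delta}{e-1}\bigr)$, the stated robustness bound.

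Finally, for the parametrized tradeoff, I would enforce $(O(1)+2\tilde c)$-consistency by matching the leading term: setting $\frac{2em}{\delta}=2\tilde c$ gives $\delta=em/\tilde c$, consistent with the choice $\delta=\frac{e}{(e-1)\tilde c}(\rho_m^*-m)$ of the preceding discussion since $\rho_m^*-m\to(e-1)m$. Substituting this $\delta$ into $2e\bigl(m+\frac{\delta}{e-1}\bigr)$ gives $2em+\frac{2e}{e-1}\cdot\frac{em}{\tilde c}=2em\bigl(1+\frac{e}{(e-1)\tilde c}\bigr)$, the claimed robustness. The main—and essentially only—obstacle is the deceptive factor $\frac{b^m-b}{b-1}$, which must be recognized as $\Theta(m)$ rather than $O(1)$; once that is handled correctly in both the consistency and robustness expressions, the remainder is routine evaluation of limits.
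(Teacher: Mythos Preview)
Your proposal is correct and follows essentially the same approach as the paper: specialize the bounds of Theorem~\ref{thm:upper.2} to the optimal base, and pass to the limit $m\to\infty$ using $b^m\to e$ and $\rho_m^*\sim em$. The paper leaves the corollary as an immediate consequence of the preceding discussion (substituting $\rho_m^*-m\to(e-1)m$), whereas you compute each limiting term directly from Theorem~\ref{thm:upper.2}; the two routes are equivalent, and your identification of $\frac{b^m-b}{b-1}\sim(e-1)m$ is exactly the key step underlying the paper's use of $\rho_m^*-m$.
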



Next, we show a negative result on the tradeoff between the consistency and robustness that any strategy can achieve. The proof follows an approach that we generalize in the weak predictions setting (proof of the lower bound in Theorem~\ref{thm:bound.2.noisy}).

\begin{theorem}
Any $c$-consistent strategy for searching with directional prediction, where $c=1+2\tilde{c}$,
has robustness at least $1+2\rho_{m-1}^*(1+\frac{1}{\tilde{c}-1}))$. In particular, for $m \to \infty$, its
robustness is at least $1+2e(m-1)(1+\frac{1}{\tilde{c}-1})$.
\label{thm:lower.2}
\end{theorem}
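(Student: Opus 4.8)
The plan is to exploit the asymmetry between consistency and robustness: consistency is evaluated \emph{only} against targets on the predicted ray $0$, whereas robustness is evaluated against targets on all rays. I would first extract from the consistency guarantee a clean algebraic constraint on the lengths of the segments exploring ray $0$, and then argue that honoring this constraint forces a provably large competitive ratio on the remaining $m-1$ rays, which behave like an instance of $(m-1)$-ray search. The factor $\rho_{m-1}^*$ will come from applying Gal's theorem (Theorem~\ref{thm:sch}) to the subsequence of non-predicted-ray segments, while the factor $1+\frac{1}{\tilde c-1}$ will quantify the unavoidable overhead incurred by the ``wasted'' exploration of ray $0$.

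Concretely, after the standard normalization that lets us assume the strategy is cyclic and explores ray $0$ once every $m$ iterations, write $a_k=x_{km}$ for the $k$-th ray-$0$ segment, and let $W=(w_t)_{t\ge 0}$ denote the remaining segment lengths, reindexed so that cycle $k$ contributes $w_{k(m-1)},\dots,w_{(k+1)(m-1)-1}$. A target just beyond $a_k$ is located on the next visit to ray $0$, so $c$-consistency is equivalent to $\sum_{j=0}^{(k+1)m-1}x_j\le \tilde c\, a_k$ for all $k$. Writing $A_k=\sum_{l=0}^{k}a_l$ and $W_k=\sum_{t=0}^{(k+1)(m-1)-1}w_t$, this reads $A_k+W_k\le \tilde c(A_k-A_{k-1})$, from which I obtain the crucial inequality
\[
A_k \ \ge\ \frac{1}{\tilde c-1}\,W_k .
\]
I stress that the constant $\tfrac{1}{\tilde c-1}$ (rather than the weaker $\tfrac{1}{\tilde c}$ one would get from the single segment $a_k$) arises precisely from summing the contributions of \emph{all} ray-$0$ segments, each forced to be large by consistency.

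Next I would estimate robustness using targets just beyond the ray-$(m-1)$ turn points, i.e. the last ray explored in each cycle, which is the worst ray and hence determines the competitive ratio. For the target beyond $w_{j_k}$ with $j_k=(k+1)(m-1)-1$, equation~\eqref{eq:comp.cyclic} gives
\[
\comp(X)\ \ge\ 1+2\,\frac{A_{k+1}+\sum_{t=0}^{j_k+m-2}w_t}{w_{j_k}}
\ =\ 1+2\Big(\tilde F_k(W)+\frac{A_{k+1}}{w_{j_k}}\Big),
\]
where $\tilde F_k(W)=\big(\sum_{t=0}^{j_k+m-2}w_t\big)/w_{j_k}$ is exactly the $(m-1)$-ray functional at a last-ray index. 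Using $A_{k+1}\ge \frac{1}{\tilde c-1}W_{k+1}\ge \frac{1}{\tilde c-1}\sum_{t=0}^{j_k+m-2}w_t$ turns the overhead term into $\frac{1}{\tilde c-1}\tilde F_k(W)$, so each term is at least $\frac{\tilde c}{\tilde c-1}\tilde F_k(W)$, whence $\comp(X)\ge 1+2\,\frac{\tilde c}{\tilde c-1}\sup_k\tilde F_k(W)$. Finally I would verify that the family $\tilde F_k$ satisfies the hypotheses of Theorem~\ref{thm:sch} (the only nontrivial point being condition~$5$, whose window here advances by a full period of $m-1$, exactly as in the functional used in the proof of Theorem~\ref{thm:advice.lower}); Gal's theorem then reduces $W$ to $G_{\alpha_W}$ and yields $\sup_k\tilde F_k(W)\ge \alpha_W^{m-1}/(\alpha_W-1)\ge \rho_{m-1}^*$. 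Substituting gives robustness at least $1+2\rho_{m-1}^*\tfrac{\tilde c}{\tilde c-1}=1+2\rho_{m-1}^*\big(1+\tfrac{1}{\tilde c-1}\big)$, with the $m\to\infty$ form following from $\rho_{m-1}^*\sim e(m-1)$.

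The main obstacle is twofold. First, the overhead bound must use the \emph{cumulative} ray-$0$ effort $A_{k+1}$ and not merely the most recent segment; obtaining the sharp constant $\tfrac{1}{\tilde c-1}$ hinges on this, and it is the step separating this bound from a naive one. Second, the reduction to a cyclic strategy must be justified with care, since \emph{consistency}, not only robustness, has to be preserved under the normalization; alternatively one argues directly about a general strategy, in which case the per-cycle bookkeeping defining $\tilde F_k$ and the verification of Gal's shift condition become the delicate points.
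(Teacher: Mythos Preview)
Your overall architecture is right and matches the paper's: extract $A_k\ge\frac{1}{\tilde c-1}W_k$ from consistency, then add this overhead to the $(m-1)$-ray lower bound on the non-predicted rays. However, two of the steps you flag as ``obstacles'' are in fact genuine gaps that your outline does not close.

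First, the reduction to a cyclic strategy that visits ray $0$ exactly once per $m$ iterations is not available here. The standard normalization (sort segment lengths and visit rays in round-robin order) controls the competitive ratio, i.e.\ robustness, but it can move segments on and off ray $0$ and therefore does \emph{not} preserve $c$-consistency; there is no reason the transformed strategy still satisfies your inequality $A_k+W_k\le\tilde c\,a_k$. The paper sidesteps this entirely: it decomposes an \emph{arbitrary} strategy into alternating phases (ray $0$ vs.\ some subset of $\{1,\dots,m-1\}$), with $x_{2i}$ the ray-$0$ phase length and $x_{2i+1}$ the aggregate length of the intervening ``other'' phase. Your key inequality then becomes $S_i/S_i'\ge 1/(\tilde c-1)$ with $S_i=\sum_{j\le i}x_{2j}$, $S_i'=\sum_{j<i}x_{2j+1}$, and no cyclicity is needed.

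Second, restricting robustness to targets on ray $m-1$ and applying Gal's theorem to the subfamily $\tilde F_k=F_{j_k}$ does not work. The assertion that ray $m-1$ is ``the worst ray and hence determines the competitive ratio'' is false for a general (even cyclic) sequence $W$; one can make the ray-$(m-1)$ segments much larger than the others so that $\sup_k\tilde F_k(W)<\rho_{m-1}^*$ while $\sup_i F_i(W)$ is huge. Correspondingly, condition~5 of Theorem~\ref{thm:sch} fails for $\tilde F_k$: shifting $W$ by one moves the denominator from $w_{j_k}$ to $w_{j_k+1}$, whereas incrementing $k$ moves it to $w_{j_k+m-1}$, and there is no inequality between these for $m>2$. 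The paper avoids this by invoking the $(m-1)$-ray lower bound as a black box on the ``other'' phases: there exists $\bar i$ and a target $t'_{\bar i}$ on some non-predicted ray with $S'_{\bar i}/d(t'_{\bar i})\ge\rho_{m-1}^*$, and combining with $S_{\bar i}/S'_{\bar i}\ge 1/(\tilde c-1)$ gives the claimed robustness. If you prefer to keep an explicit Gal argument, you must use the full family $F_i(W)=\sum_{t\le i+m-2}w_t/w_i$ over \emph{all} $i$, for which condition~5 does hold, and then check that the ray-$0$ overhead preceding $w_{i+m-1}$ is at least $\tfrac{1}{\tilde c-1}\sum_{t\le i+m-2}w_t$ for every $i$, not just the last-ray indices.
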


\begin{proof}
Any search strategy for the problem operates in {\em phases}, in the sense that the searcher alternates between ray 0 (the predicted ray), and some subset of rays in $\{1, \ldots ,m-1\}$. Namely, every strategy $X$ gives rise to a sequence of the form $(x_i)_{i \geq 0}$, in which $x_i$, for even $i$, describes the explored length of $X$ on ray $0$, and $x_i$ for $i$ odd, describes the aggregate explored length on rays $\{1, \ldots ,m-1\}$. Thus, in phase $i$, the searcher incurs a total cost of $2x_i$, for all $i$ (except for the phase in which the target is found). From the consistency requirement, it must then be that for all $i$,
\[
\frac{x_{2i}+2\sum_{j=0}^{i}x_{2j}+\sum_{j=1}^{i+1} x_{2j-1}}{x_{2i}} \leq c \Rightarrow \frac{\sum_{j=0}^{2i+1}x_j}{x_{2i}} \leq \tilde{c}.
\]
Define $S_i=\sum_{j=0}^i x_{2j}$, and $S_i'=\sum_{j=0}^{i-1}x_{2j+1}$, then the above inequality gives 
\begin{equation}
\frac{S_i+S_i'+x_{2i+1}}{x_{2i}} \leq \tilde{c} \Rightarrow S_i+S'_i \leq \tilde{c}x_{2i} \leq \tilde{c} S_i \Rightarrow \frac{S_i}{S'_i} \geq \frac{1}{\tilde{c}-1}.
\label{eq:lower.2.1}
\end{equation}
In order to bound the robustness of the strategy, consider the phase $2i-1$, in which the searcher explores a subset of rays in $\{1, \ldots ,m-1\}$ with exploration length $x_{2i-1}$. Since searching in a $(m-1)$-ray star has competitive ratio $r_{m-1}^*=1+2\rho_{m-1}^*$, there exists some 
particular value of the index $i$, say $\bar{i}$, and a hiding position, say $t_{\bar i}'$, with the property that
\begin{equation}
1+\frac{2S'_{\bar{i}}}{d(t'_{\bar{i}})} \geq 1+2\rho_{m-1}^*-\epsilon \Rightarrow  \frac{S_{\bar{i}}'}{d'_{\bar{i}}} \geq \rho_{m-1}^*-\epsilon,
\label{eq:lower.2.2}
\end{equation}
where $\epsilon \to 0$, if we allow $i$. To simplify the expressions, we can thus assume that $\epsilon=0$.

Consider the exploration in phase $2i-1$, and a target hiding at $t_{\bar i}'$, as described above. Then the robustness of the strategy is at least
\begin{align}
\frac{2(S_{\bar i}+S_{\bar i}')+d_{\bar i}}{d'_{\bar i}}&=1+2\frac{S_{\bar i}+S_{\bar i}'}{d'_{\bar i}} \notag \\
&\geq 1+2(1+\frac{1}{\tilde{c}-1})\frac{S'_{\bar i}}{d'_{\bar i}}, \tag{From~\eqref{eq:lower.2.1}} \\
& \geq 1+2(1+\frac{1}{\tilde{c}-1})\rho_{m-1}^*. \tag{From~\eqref{eq:lower.2.2}}
\end{align}
\end{proof}


\subsection{Extension to weak predictions}

We consider the model in which if the predicted ray is $h \in \{ 0, \ldots ,m-1 \}$, then the searcher would like to minimize its competitive ratio, assuming that the target hides in one of the rays in the interval $[h-H \bmod m, h+H \bmod m]$, where $H\leq m/2$ is the tolerance parameter. 
This captures the case that the hiding ray is expected to be in a vicinity of the predicted ray, with respect to the tolerance of the searcher. We denote this set of rays by $R_H$, and its complement, by $\overline{R}_H$, and note that $|R_H|=2H+1$ and $|\overline{R}_H|=m-2H-1$. Without loss of generality, we may assume that $R_H=\{0,\ldots 2H\}$.

We prove an asymptotically tight bound on the tradeoff between the competitive ratio and the robustness, that generalizes the error-free setting. Note that every strategy has competitive ratio at least $1+2\rho^*_{2H+1}$, since the weak prediction may incur a search in a $(2H+1)$-ray star. We also obtain an interesting corollary: as the competitive ratio approaches the optimal bound of $1+2\rho^*_{2H+1}$, the robustness increases dramatically. 

\begin{theorem}
For every $\tilde{c}>\rho^*_{2H-1}$ there exists a strategy with directional hint that has competitive ratio $1+2\tilde{c}$, and robustness at most $O(\frac{\rho_{2H+1}^*}{\tilde{c}-\rho^*_{m-2H-1}} (m-2H))$. Furthermore, this bound is tight, i.e., every strategy of competitive ratio $1+2\tilde{c}$
has robustness at least $\Omega(\frac{\rho_{2H+1}^*}{\tilde{c}-\rho^*_{m-2H-1}} (m-2H))$.
\label{thm:bound.2.noisy}
\end{theorem}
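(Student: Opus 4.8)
The plan is to prove the two directions separately, in both cases by decomposing the $m$-ray star into the \emph{trusted} sub-star on $R_H=\{0,\dots,2H\}$ (a $(2H+1)$-ray star) and the \emph{untrusted} sub-star on $\overline{R}_H=\{2H+1,\dots,m-1\}$ (a $(m-2H-1)$-ray star), and by regarding the searcher as interleaving a $(2H+1)$-ray search with a $(m-2H-1)$-ray search. Throughout I write $S$ and $S'$ for the cumulative lengths explored on $R_H$ and on $\overline{R}_H$, respectively.

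\textbf{Upper bound.} I would generalize the biased strategy of Theorem~\ref{thm:upper.2}: the searcher cycles through all $m$ rays with base $b$, exploring each ray of $R_H$ to the boosted depth $\delta b^{i}$ and each ray of $\overline{R}_H$ to depth $b^{i}$, for a bias parameter $\delta>1$. For a target in $R_H$ the cost splits into a boosted $(2H+1)$-ray component and the wasted exploration of $\overline{R}_H$; the latter is damped by a factor $1/\delta$ and scales with the $\overline{R}_H$ geometric mass $\tfrac{b^{m}-b^{2H+1}}{b-1}=\Theta(m-2H)$. For a target in $\overline{R}_H$ the dominant cost is instead the boosted $R_H$ exploration, which scales linearly in $\delta$. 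Hence, exactly as in Corollary~\ref{cor:infty}, the competitive ratio has the shape $c_0+\Theta(1/\delta)$ and the robustness the shape $c_1+\Theta(\delta)$. Eliminating $\delta$ by imposing competitive ratio $=1+2\tilde c$ and choosing $b$ near the optimal base $\tfrac{m+1}{m}$ then leaves a robustness of the claimed order $O\!\big(\tfrac{\rho^*_{2H+1}}{\tilde c-\rho^*_{m-2H-1}}(m-2H)\big)$. I would not carry out the recurrence arithmetic here; the structural point is that the $\delta$--$1/\delta$ coupling makes the robustness inversely proportional to the slack of the competitive-ratio budget over the inherent cost of covering both sub-stars, which is the denominator of the claimed bound.

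\textbf{Lower bound.} Here I would generalize the argument of Theorem~\ref{thm:lower.2}. Restricting any strategy to the rays it visits yields a $(2H+1)$-ray sub-search $P$ on $R_H$ and a $(m-2H-1)$-ray sub-search $Q$ on $\overline{R}_H$, interleaved along the global timeline. Applying Gal's Theorem~\ref{thm:sch} to $P$ produces a worst-case target in $R_H$ at depth $d$ with $S/d\ge\rho^*_{2H+1}$; since the competitive ratio on $R_H$ equals $1+2\tfrac{S+S'}{d}$ and is at most $1+2\tilde c$, this caps the $\overline{R}_H$-exploration that may accompany a given amount of $R_H$-exploration, i.e.\ it forces a lower bound on the ratio $S/S'$ governed by the slack in the competitive-ratio budget. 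Applying Theorem~\ref{thm:sch} to $Q$ then produces a worst-case target in $\overline{R}_H$ at depth $d'$ with $S'/d'\ge\rho^*_{m-2H-1}$, so that the robustness is at least $1+2\tfrac{S+S'}{d'}=1+2\tfrac{S'}{d'}\big(1+\tfrac{S}{S'}\big)$, which is of order $\rho^*_{m-2H-1}\big(1+\tfrac{S}{S'}\big)$. Since $\rho^*_{m-2H-1}=\Theta(m-2H)$, combining the two inequalities delivers the claimed $\Omega\!\big(\tfrac{\rho^*_{2H+1}}{\tilde c-\rho^*_{m-2H-1}}(m-2H)\big)$, matching the upper bound up to constants; consistently with the corollary preceding the theorem, the robustness then grows without bound as $\tilde c$ approaches its least feasible value.

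\textbf{Main obstacle.} The crux is transferring the ratio bound on $S/S'$ from the timeline of the worst-case $R_H$-target, where the competitive-ratio constraint is extracted, to the timeline of the worst-case $\overline{R}_H$-target, where robustness is evaluated. When $R_H$ degenerated to a single ray (Theorem~\ref{thm:lower.2}), the competitive-ratio inequality held at \emph{every} turn point, so the relation $S/S'\ge\tfrac1{\tilde c-1}$ was available at every phase for free; here $R_H$ is a genuine $(2H+1)$-ray star and the relation is immediate only at Gal's single extremal target. I expect to resolve this exactly as in the proof of Theorem~\ref{thm:advice.lower}: choose an explicit labeling of the segments of $P$ and $Q$ into one global sequence with contiguous indices, apply Theorem~\ref{thm:sch} to a single combined functional whose numerator aggregates $S+S'$ and whose denominator is read at an $\overline{R}_H$ turn point, and thereby obtain the ratio relation along the common geometric rate $\alpha=\limsup_\ell x_\ell^{1/\ell}$ rather than at one isolated point. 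Verifying that this combined functional satisfies the five hypotheses of Theorem~\ref{thm:sch}, in particular the sub-additivity $F_i(X+Y)\le\max(F_i(X),F_i(Y))$ and the shift condition $F_{i+k}(X)\ge F_i(X^{+k})$, is the step I would treat most carefully.
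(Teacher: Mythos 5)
Your upper bound does not prove the theorem: the strategy you describe cannot achieve competitive ratio $1+2\tilde{c}$ over the full claimed range of $\tilde{c}$. By cycling through all $m$ rays with per-iteration depths $\delta b^i$ (on $R_H$) and $b^i$ (on $\overline{R}_H$), the consecutive depths on the trusted rays grow by a factor $b$ inside a cycle but by a factor $b^{m-2H}$ across cycle boundaries, so they can never approximate the uniform geometric sequence with ratio $\frac{2H+1}{2H}$ that optimal $(2H+1)$-ray search requires. Concretely, with your choice $b\approx\frac{m+1}{m}$, the cost charged to a worst-case target on ray $2H$ by the \emph{trusted rays alone} is $2\frac{b^m}{b^m-1}\left(1+\frac{b^m-b^{m-2H}}{b-1}\right)\approx 2\frac{e}{e-1}(1+2He)$, which for $H=1$ is about $20.4$, whereas $2\rho^*_3=13.5$; increasing $\delta$ only damps the $\overline{R}_H$ term, and taking $b$ larger makes the cross-cycle term $2b^{m-1}$ blow up, so no choice of parameters removes this constant-factor excess. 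Hence your strategy is infeasible precisely for $\tilde{c}$ in a constant-factor window above $\rho^*_{2H+1}$, which is the regime where the theorem has content (the robustness must blow up as $\tilde{c}\to\rho^*_{2H+1}$, matching the lower bound). The paper's strategy is structurally different, and the difference is essential: it takes $b=\frac{2H+1}{2H}$, advances the geometric exponent only with the $2H+1$ trusted rays (round $i$ explores ray $j\in R_H$ to depth $\delta b^{(2H+1)i+j}$), and sweeps \emph{every} ray of $\overline{R}_H$ to one common frontier depth $b^{(2H+1)i+2H}$ per round. The trusted part is then an exactly optimal $(2H+1)$-ray geometric search, giving competitive ratio $1+2\rho^*_{2H+1}+\Theta(\frac{1}{\delta}(m-2H-1))$, which does converge to $1+2\rho^*_{2H+1}$ as $\delta$ grows.

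On the lower bound, your skeleton (phases, $S$, $S'$, and the ratio bound $\frac{S}{S'}\geq\frac{\rho^*_{2H+1}}{\tilde{c}-\rho^*_{2H+1}}$ extracted at the extremal trusted phase $\bar{i}$) matches the paper, and you correctly identify the crux: that ratio bound lives at phase $\bar{i}$, while the untrusted worst-case target you obtain by applying Gal's theorem to the sub-search on $\overline{R}_H$ lives at a different phase. But your proposed repair---a global labeling plus one combined functional, as in Theorem~\ref{thm:advice.lower}---is exactly the step you leave undone, and it is unlikely to go through as described: the labeling there exists only because Lemma~\ref{lemma:property.first.order} forces a rigid periodic structure on parallel line searchers, whereas here the interleaving of the two sub-searches (how many rays, and to what depths, each phase explores) is arbitrary, so there is no canonical contiguous indexing to which Theorem~\ref{thm:sch} can be applied. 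The paper sidesteps the problem entirely: on the untrusted side it uses no asymptotic competitive-ratio bound at all, but a pigeonhole fact valid at \emph{every} phase---after spending $2S_i'$ on the $m-2H-1$ rays of $\overline{R}_H$, some such ray is explored to depth at most $S_i'/(m-2H-1)$, so a target $t_i'$ with $d(t_i')=O\left(\frac{S_i'}{m-2H-1}\right)$ is still hidden. Since this holds in particular at phase $\bar{i}$, evaluating the robustness there against $t_{\bar{i}}'$ gives $\Omega\left((1+\frac{S_{\bar{i}}}{S'_{\bar{i}}})(m-2H)\right)$ and hence the claim; the constant-factor loss relative to $\rho^*_{m-2H-1}$ is immaterial inside the $\Omega(\cdot)$. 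In short: keep the Gal-type argument only for the trusted star, and replace your Gal argument on $\overline{R}_H$ by a per-phase pigeonhole bound; that is the missing idea.
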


\begin{proof}
We first prove the upper bound, which generalizes the strategy we used in the context of consistency/robustness tradeoffs. Define $b=\frac{2H+1}{2H}$, i.e., the optimal base of a geometric strategy for searching in a $(2H+1)$-ray star, and $\delta>1$, to be specified later. Consider a cyclic strategy 
which visits rays $0, \ldots ,m-1$ in this order, and which works in rounds. Specifically, in round $i$, it explores ray $j \in R_{H}$ to length $\delta b^{(2H+1)i+j}$, and every ray in $\overline{R}_H$ to length $b^{(2H+1)i+2H}$.
The competitive ratio of this strategy, assuming error at most $H$, is maximized for targets hiding infinitesimally beyond the turn points on ray $2H$ in $R_H$. Simple calculations show that the competitive ratio is
\[
1+2\rho^*_{2H+1}+\Theta(\frac{1}{\delta} \frac{b^{2H+1}}{b^{2H+1}-1} (m-2H-1)),
\]
and note that $\frac{b^{2H+1}}{b^{2H+1}-1}$ is at most $e/(e-1)$, by the choice of $b$. 
Thus, for the competitive ratio to be at most $1+2\tilde{c}$, it must be that
\begin{equation}
\delta \in \Omega(\frac{m-2H}{\tilde{c}-\rho_{2H+1}^*}).
\label{eq:delta.noisy}
\end{equation}
The robustness of the strategy is evaluated for a target hiding at distance infinitesimally beyond the turn points of the searcher on ray $m-1$. After simple calculations we obtain that the robustness is at most $O(\delta \rho_{2H+1}^*(m-2H+1))$, which from~\eqref{eq:delta.noisy} is at most $O(\frac{\rho_{2H+1}^*}{\tilde{c}-\rho^*_{2H+1}} (m-2H))$, and which proves the upper bound. 

We now proceed with the lower bound. Any strategy for the problem consists of {\em phases}, which alternate between searching a subset of  $R_H$ and a subset of $\overline{R}_H$. Namely, every strategy $X$ is of the form $X=(x_i)_{i \geq 0}$, in which $x_i$, for even $i$, describes the aggregate explored length of $X$ on rays that belong exclusively in  $R_H$, and $x_i$ for $i$ odd, describes the aggregate explored length on rays that belong exclusively in $\overline{R}_H$. Thus, in each phase $i$, the searcher incurs a cost of $2x_i$, for all $i$ (except for the phase at which the target is found). 

Define $S_i=\sum_{j=0}^i x_{2j}$, and $S_i'=\sum_{j=0}^{i-1}x_{2j+1}$. Then from the competitiveness of the strategy, for all even $i$, and for any target $t_H$ hiding in $R_H$ that is discovered in phase $2(i+1)$, 
\vspace{-0.2cm}
\begin{equation}
1+2\frac{S_i+S'_i}{d(t_H)} \leq c \Rightarrow \tilde{c} \geq \frac{S_i+S'_i}{d(t_H)}.
\label{eq:noisy.lower.direction.1}
\end{equation}
We know that searching in a $(2H+1)$-ray has competitive ratio at least $1+2\rho^*_{2H+1}$. This means that for all $i$, there exists a target $d_i$ that is discovered in phase $2(i+1)$ such that 
\[
1+2\sup_i \frac{S_i}{d_i} \geq 1+2\rho^*_{2H+1}, 
\]
hence there exists some $\bar{i}$ for which the above inequality gives
$1+2\frac{S_{\bar{i}}}{d_{\bar{i}}} \geq 1+2\rho^*_{2H+1} -\epsilon$, 
where $\epsilon \to 0$, as $\bar{i}$ is allowed to be unbounded. To simplify the exposition, we can thus assume that $\epsilon=0$, and obtain
\begin{equation}
d(t_{\bar{i}}) \leq \frac{S_{\bar{i}}}{\rho^*_{2H+1}}.
\label{eq:noisy.lower.direction.2}
\end{equation}
Moreover, for any $i$, there exists a hiding position for a target $t'_i$ in $\overline{R}_H$ that is first discovered in phase $2i+1$ it must be that
\begin{equation}
1+2\frac{S_i'}{d(t_i')} \geq m-2H-1 \Rightarrow d(t_i') =O(\frac{S_i'}{m-2H-1}),
\label{eq:noisy.lower.direction.3}
\end{equation}
since $2S_i'$ describes the cost incurred by the searcher on rays in $\overline{R}_H$, right before phase $2i+1$ starts. Note also that this inequality holds {\em for all i}, unlike~\eqref{eq:noisy.lower.direction.2}, that holds only for $\bar{i}$.

To bound the robustness, consider the phase $2\bar{i}+1$, with $\bar{i}$ as defined above, and the target $t_{\bar{i}}'$, again as defined above. Then we have that
\begin{equation}
\textrm{Robustness} \geq 1+2\frac{S_{\bar{i}}+S'_{\bar{i}}}{d(t_{\bar{i}}')} 
=\Omega((1+\frac{S_{\bar{i}}}{S'_{\bar{i}}}) (m-2H)), 
\label{eq:noisy.lower.direction.4}
\end{equation}
from~\eqref{eq:noisy.lower.direction.3}.
Moreover, from~\eqref{eq:noisy.lower.direction.1} and~\eqref{eq:noisy.lower.direction.2} we have that
\[
\frac{S_{\bar{i}}+S'_{\bar{i}}}{S'_{\bar{i}}} \leq \frac{\tilde{c}} {\rho^*_{2H+1}} \Rightarrow
\frac{S_{\bar{i}}}{S'_{\bar{i}}} \geq \frac{\rho^*_{2H+1}}{\tilde{c}-\rho^*_{2H+1}},
\]
and substituting the above inequality to~\eqref{eq:noisy.lower.direction.1} yields the result.
\end{proof}

\section{Ray search with positional prediction}
\label{sec:position}

In this section we study $m$-ray searching in the setting in which the prediction is the position of the target in the star environment. Namely, the prediction $h$ is a pair $(d_h,u_h)$, where $d_h$ is the predicted distance from $O$ and $u_h$ is the predicted ray. 
We first show the upper bound.
\begin{theorem}
For any $r\geq r_m^\star$, there is an $r$-robust strategy of consistency at most $1+2\frac{1}{b_r-1}.$
\label{thm:3.upper}
\end{theorem}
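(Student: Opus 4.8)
The plan is to use a single geometric strategy of base $b_r$, scaled and phase-shifted so that the predicted position $(d_h,u_h)$ lands exactly on one of its turn points. Normalising the minimum hiding distance to $1$ (so $d_h\ge 1$), I would let $i^\star$ be the smallest non-negative integer satisfying $i^\star\equiv u_h \pmod m$ and $b_r^{i^\star}\ge d_h$, and define the cyclic strategy $X=(x_i,\,i\bmod m)_{i\ge 0}$ with $x_i=d_h\,b_r^{\,i-i^\star}$. By construction $x_{i^\star}=d_h$ and iteration $i^\star$ explores ray $u_h$, so the $i^\star$-th turn point coincides with the predicted target, while the minimal choice of $i^\star$ guarantees $x_0=d_h b_r^{-i^\star}\le 1$.

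For the consistency, an error-free prediction places the target exactly at the $i^\star$-th turn point; since the previous visit to ray $u_h$ reaches only $b_r^{-m}d_h<d_h$, iteration $i^\star$ is the first to reach distance $d_h$ on that ray, so the searcher locates the target at cost $2\sum_{j=0}^{i^\star-1}x_j+d_h$. The prefix is a truncated geometric series, $\sum_{j=0}^{i^\star-1}x_j=d_h\frac{1-b_r^{-i^\star}}{b_r-1}\le \frac{d_h}{b_r-1}$, so dividing by $d(t)=d_h$ gives consistency $1+2\frac{1-b_r^{-i^\star}}{b_r-1}\le 1+2\frac{1}{b_r-1}$. This direction is routine, and amounts to the observation that the cost of reaching a turn point is $\tfrac{1}{b_r-1}$ times that turn point's distance.

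The substantive direction is robustness, i.e.\ verifying $\comp(X)\le r$. Here the minimum-distance normalisation is essential: without it one could rush straight to $(d_h,u_h)$ for consistency $1$, but then a vanishingly close adversarial target on a ray visited just after $u_h$ would blow up the ratio. I would invoke the standard fact that the worst hiding positions lie infinitesimally beyond turn points, but carefully treat also the targets lying below the \emph{first} turn point of a ray (which arise precisely because later rays of the first cycle already reach past distance $1$). For a target just beyond the $i$-th turn point the ratio equals $1+2\frac{\sum_{j=0}^{i+m-1}x_j}{x_i}=1+2\frac{b_r^m-b_r^{-i}}{b_r-1}\le r$, using $\comp(G_{b_r})=r$; for a target at distance $d\ge 1$ below the first turn point of a ray $u\le m-1$, the bounds $x_0\le 1$ and $d\ge 1$ give a ratio at most $1+2\frac{b_r^{m-1}-1}{b_r-1}<r$. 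Since every target falls into one of these two cases, $X$ is $r$-robust.

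The crux of the argument is exactly this compatibility check: aligning a turn point with the arbitrary distance $d_h$ forces a rescaling of the geometric sequence, and one must ensure the rescaling does not push the early segments above the minimum distance and thereby break robustness. Choosing $i^\star$ as the first admissible index above $\log_{b_r}d_h$ resolves this tension while keeping the consistency overhead pinned at the geometric value $\frac{1}{b_r-1}$. I expect this robustness bookkeeping, rather than the consistency computation, to be the main obstacle.
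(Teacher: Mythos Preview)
Your construction is the same as the paper's: both rescale the geometric strategy $G_{b_r}$ so that a turn point lands exactly at $(d_h,u_h)$, then read off consistency $1+\tfrac{2}{b_r-1}$ from the truncated geometric prefix. The paper divides by $\lambda=b_r^{j_h}/d_h$ and implicitly chooses the cyclic permutation so that iteration $j_h$ visits ray $u_h$; you instead fix the permutation as $0,1,\dots,m-1$ and absorb the adjustment into $i^\star$. These produce the same search sequence up to a relabelling of rays, and the consistency calculation is identical.

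There is one small gap in your robustness case split. You treat (a) targets just beyond a turn point $x_i$ and (b) targets in $[1,x_u)$ on a ray $u\le m-1$, but your guarantee $x_0\le1$ does not force $x_u\ge1$ for every $u<m$. If $x_u<1$ on some ray $u$, then a target at $d=1$ on that ray is covered by neither case: it lies strictly beyond $x_u$, yet it is not ``just beyond'' any turn point $x_i\ge1$, and the interval $[1,x_u)$ is empty. The repair is immediate---from $x_u<1$ one gets
\[
\sum_{j=0}^{u+m-1}x_j \;=\; x_u\cdot\frac{b_r^{m}-b_r^{-u}}{b_r-1}\;<\;\frac{b_r^{m}}{b_r-1},
\]
so the ratio at $d=1$ is still below $r$---but as written, ``every target falls into one of these two cases'' is not literally true. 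The paper, for its part, simply asserts that scaling $G_{b_r}$ down preserves $r$-robustness and does not examine the minimum-distance boundary at all, so your explicit verification is in fact the more careful treatment once this sub-case is added.
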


\begin{proof}
Consider the cyclic, geometric strategy $G_{b_r}=(b_r^i)_{i \geq 0}$, and let $j_h$ be an index such that  $b_r^{j_h-m} < d_h \leq b_r^{j_h}$. Define $\lambda=b_r^{j_h}/d_h \geq 1$,  and let $G'$ denote the cyclic geometric strategy $G'=(\frac{1}{\lambda} b_r^i)_{i \geq 0}$,
which explores the rays in the same cyclic order as $G$. Note that since $G_{b_r}$ is $r$-robust, so is the scaled-down strategy $G'$. It remains then to bound the
consistency of $G'$. We have that 
\[
d(G',h)=\frac{1}{\lambda}(2\sum_{i=0}^{j_h-1} b_r^i +b_r^{j_h}),
\]
and since $d_h=b_r^{j_h}/\lambda$, the consistency of the strategy is at most 
\[
\frac{d(G',h)}{d_h} \leq 1+2 \frac{b_r^{j_h}-1}{b_r^{j_h}(b_r-1)}\leq 1+2\frac{1}{b_r-1}.
\]
\end{proof}

We will now show that the strategy of Theorem~\ref{thm:3.upper} is Pareto-optimal. The proof of the following theorem generalizes, but also simplifies the lower bound of~\cite{DBLP:conf/innovations/000121} which applies only to linear search ($m=2$). The crux in the proof is to exploit the properties of the parameter $\alpha_Y$, where $Y$ will be defined as the sequence of search lengths of a cyclic strategy defined by a linear recurrence relation. In particular, these properties allow us to bypass technical complications related to the study of such relations, by establishing appropriate lower bounds (as opposed to solving the recurrence relation).

\begin{theorem}
For any $r\geq r_m^\star$, no $r$-robust strategy for searching with positional prediction has consistency better than $1+2\frac{1}{b_r-1}.$
\label{thm:3.lower}
\end{theorem}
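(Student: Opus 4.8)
The plan is to reduce the lower bound to a statement about the asymptotic growth rate of the search lengths and then feed it through Gal's theorem, exactly in the spirit of the proofs of Theorems~\ref{thm:advice.lower} and~\ref{thm:lower.2}. Assume without loss of generality that the predicted ray is ray $0$, and consider an error-free instance in which the target sits at a turn point of the searcher on ray~$0$. If $X=(x_i,u_i)_i$ is the ($r$-robust) strategy and $t_p$ is the iteration of its $p$-th visit to ray~$0$, then a target placed at the turn point of iteration $t_p$ is found at cost $2\sum_{j<t_p}x_j+x_{t_p}$, so the error-free ratio at that prediction is exactly $1+2F_p(X)$ with $F_p(X)=\frac{\sum_{j<t_p}x_j}{x_{t_p}}$. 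Ranging the predicted distance over the turn points thus bounds the consistency from below by $1+2\sup_p F_p(X)$.

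Next I would verify that $F_p$ satisfies the hypotheses of Theorem~\ref{thm:sch} (scale-invariance, $\max$-subadditivity, and the shift inequality being the substantive conditions, as for the functional in~\eqref{eq:upper.advice.nice}) and apply Gal's theorem to replace $X$ by the geometric sequence $G_{\alpha_X}$, where $\alpha_X=\limsup_n x_n^{1/n}$. If $\alpha_X\le 1$ the resulting supremum is unbounded and there is nothing to prove, so assume $\alpha_X>1$; a direct computation then gives $\sup_p F_p(G_{\alpha_X})=\sup_p \frac{\alpha_X^{\,pm}-1}{(\alpha_X-1)\alpha_X^{\,pm}}=\frac{1}{\alpha_X-1}$. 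Finally, $r$-robustness forces $\alpha_X\le b_r$ by the bound recalled in the preliminaries, and since $\frac{1}{\alpha-1}$ is decreasing in $\alpha$, the consistency is at least $1+\frac{2}{b_r-1}$, matching the upper bound of Theorem~\ref{thm:3.upper}.

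The delicate point — and where the $\alpha_Y$ machinery does the real work — is to obtain the \emph{tight} constant $\frac{1}{b_r-1}$ rather than a weaker one. The most efficient strategies respecting the robustness bound are governed by the recurrence obtained by making the robustness constraints tight, $\sum_{j=0}^{i+m-1}y_j=\rho_r y_i$, whose search lengths $Y$ grow at the dominant-root rate $b_r$; explicitly solving this linear recurrence and then optimizing the consistency over all turn points is the technical quagmire I want to avoid. Instead, the argument only invokes the two robust facts about the growth parameter, namely $\alpha_X\le b_r$ for any $r$-robust $X$ and $\alpha_Y=b_r$ for the extremal sequence, combined through Gal's theorem and the defining identity $\rho_r(b_r-1)=b_r^m$. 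The second issue I must treat with care is that in the positional model the searcher may use a \emph{different} strategy for each predicted distance, so the bound cannot be read off a single fixed strategy; the clean way to handle this is to apply the argument to arbitrarily deep predictions, for which the effect of the fixed minimal target distance becomes negligible and the per-prediction ratio is forced up to $\frac{1}{\alpha_X-1}\ge\frac{1}{b_r-1}$. I expect making this per-prediction version uniform — equivalently, certifying the extremal geometric growth via the weights $b_r^{-j}$ without solving the recurrence — to be the main obstacle.
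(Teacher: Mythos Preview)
Your approach is in the same spirit as the paper's---reduce to Gal's theorem and the growth bound $\alpha\le b_r$---but it has a real gap at the point where you invoke Theorem~\ref{thm:sch}. The functional $F_p(X)=\bigl(\sum_{j<t_p}x_j\bigr)/x_{t_p}$, indexed by the visits $t_p$ to the predicted ray (so $t_p=pm$ after a cyclic reduction), does \emph{not} satisfy the shift condition~(5): one would need
\[
\frac{\sum_{j<(p+k)m}x_j}{x_{(p+k)m}}\ \ge\ \frac{\sum_{j=k}^{pm+k-1}x_j}{x_{pm+k}},
\]
but the two denominators sit at different indices $(p+k)m$ and $pm+k$, and the inequality already fails for $m=2$, $p=k=1$ whenever $x_4$ is large compared with $x_0,\dots,x_3$. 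The paper sidesteps this by first invoking the reduction (Lemma~2.2 of~\cite{ultimate}) to a cyclic strategy $Y$ with \emph{tight} robustness constraints $\sum_{j\le i}y_j=\rho_r\,y_{i-m+1}$; telescoping through the resulting recurrence $y_j=\rho_r(y_{j-m+1}-y_{j-m})$ rewrites the consistency ratio as essentially $1+2\rho_r\cdot y_{j_h-m+1}/y_{j_h}$, and Gal's theorem is then applied to the single-ratio functional $F_j(Y)=y_{j-m+1}/y_j$, for which condition~(5) holds with equality. The identity $\rho_r=b_r^{m}/(b_r-1)$ together with $\alpha_Y\le b_r$ finishes the computation.

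The second difficulty you flag---the dependence of the strategy on $h$---is also not resolved by ``going to arbitrarily deep predictions.'' Gal's theorem yields a supremum over \emph{all} indices of a \emph{fixed} sequence; it does not bound the ratio at the single index $j_h$ matching the given prediction. Your argument therefore proves only $\sup_h d(X,h)/d(h)\ge 1+2/(b_r-1)$ for a strategy $X$ chosen \emph{before} $h$, which is strictly weaker than the theorem. The tight-constraint reduction is precisely what decouples the two issues: it pins down the \emph{form} of the optimal $X_h$ for every $h$ (a solution of the same linear recurrence, differing only in initial data), so that Gal's bound on $\sup_j y_{j-m+1}/y_j$ becomes a uniform lower bound on $\sup_h d(Y_h,h)/d(h)$ without ever solving the recurrence---exactly the ``avoid the quagmire'' step you wanted, but achieved through the constraint tightening rather than through your functional.
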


\begin{proof}
 Let $X$ denote an $r$-robust strategy. From previous studies of star search in~\cite{ultimate,jaillet:online}, we know that any $r$-competitive search strategy $X$ can be transformed to a {\em cyclic}, $r$-competitive strategy  which has the same search lengths as $X$, and in iteration $i$ 
explores the ray to a length equal to the $i$-th smallest search length in $X$. 
Since this strategy is cyclic, its competitive ratio is given by~\eqref{eq:comp.cyclic}. Furthermore, using the same argument as in~\cite{ultimate} (namely, Lemma 2.2 in~\cite{ultimate}), there exists a cyclic strategy $Y$ that minimizes the consistency $\sup_h \frac{d(Y,h)}{d(h)}$, in which the competitiveness constraints~\eqref{eq:comp.cyclic} are satisfied with equality for iterations up to the discovery of the target. Formally, there exists a cyclic strategy $Y=(y_i)_i$ in which
\[
y_{j_h}=h, \text{for some index $j_h$ } \quad \text{and } \quad 1+2\frac{\sum_{i=0}^j y_i}{y_{j-m+1}}=r, \text{for all $j \in [m-1,j_h]$.}
\]
It follows that for all $j \leq j_h$, the search length $y_j$ is determined by the recurrence 
\begin{equation}
y_j=r(y_{j-m+1}-y_{j-m}), 
\label{eq:linear.rec}
\end{equation}
with some initial conditions $y_0, \ldots ,y_{m-1}$. We have that
\begin{equation}
d(Y,h)= \sum_{j=0}^{j_h} y_j =d_h+2\sum_{j=0}^{j_h-1} y_j=d_h+2r(y_{j_h-m+1}-y_0)+\sum_{j=0}^{m-1}y_j,
\label{eq:3.lower.1}
\end{equation}
where we used the fact that $\sum_j{y_j}$ is telescoping, as seen by~\eqref{eq:linear.rec}.
From~\eqref{eq:3.lower.1} we have 
$
\frac{d(Y,h)}{d_h} = 
1+2\frac{ry_{j_h-m+1}}{d_h}+\frac{1}{d_h}(\sum_{j=0}^{m-1}y_j-y_0).
$
Since $Y$ is cyclic, and $d_h=y_{j_h}$, we obtain that
\begin{equation}
\sup_h \frac{d(Y,h)}{d_h} \geq 1+2 \sup_{j_h} \frac{ry_{j_h-m+1}}{y_{j_h}} =1+2r \sup_{j_h}\frac{y_{j_h-m+1}}{y_{j_h}}.
\label{eq:3.lower.2}
\end{equation}
Define the functional $F_j(Y)=\frac{y_{j-m+1}}{y_j}$. This functional satisfies the conditions of Theorem~\ref{thm:sch}, hence 
$
\sup_j \frac{y_{j-m+1}}{y_j} \geq \frac{\alpha_Y^{j-m+1}}{\alpha_Y}=\alpha_Y^{-m}.  
$
Since $Y$ is $r$-robust, as discussed in Section~\ref{sec:preliminaries}
$
r \geq \comp(X) \geq \frac{\alpha_Y^m}{\alpha_Y-1},
$
where it must be $\alpha_Y \leq b_r$, from the definition of $b_r$. Thus,~\eqref{eq:3.lower.2} gives
\[
\sup_h \frac{d(Y,h)}{d_h} \geq 1+2 \frac{\alpha_Y^m}{\alpha_Y-1} \alpha_Y^{-m}=1+2\frac{1}{\alpha_Y-1}
\geq 1+2\frac{1}{b_r-1}.
\]
Hence, the consistency of $Y$ is at least $1+2\frac{1}{b_r-1}$, and thus so is the consistency of $X$.
\end{proof}

\subsection{Extension to weak predictions}

Given the prediction $h$ related to a target $t$, we define the prediction {\em error} as the distance between $t$ and $h$ in the star, normalized by the distance $d(h)$ of the prediction's position from the origin. Namely, 
$
\eta=\frac{|d(t)-d(h)|}{d(h)}.
$
We distinguish between different types of the error: If $d$ and $h$ are in the same ray, but $d(t)>d(h)$ we call the error {\em positive}, whereas if $d$ and $h$ are in the same ray, but $d(t)<d(h)$ we call the error {\em negative}. If the error is neither positive or negative, then $h$ and $t$ are in different rays.

Let $X_h$ denote the Pareto-optimal strategy of Theorem~\ref{thm:3.upper}. Let $H>0$ denote the tolerance parameter that is specified by the searcher, and consider the strategy $X_{h(1+H)}$, i.e., the strategy that pretends that the prediction is at the same ray as $h$, but at distance $d(h)(1+H)$ from $O$. The following result is a corollary of Theorem~\ref{thm:3.upper}.

\begin{corollary}
For any $H>0$ and $r\geq r_m^*$, strategy $X_{h(1+H)}$ is $r$-robust and 
has competitive ratio at most $\min \{1+2\frac{1+H}{b_r-1} ,r\}$, if the error is either positive or negative, and at most $H$. Otherwise, its competitive ratio is at most $r$.
\label{thm:3.noisy.upper}
\end{corollary}

Last, we show that the above tradeoffs are tight.
\begin{theorem}
For any $r$-robust strategy with positional prediction, there exists $q>0$ such that its competitive ratio is no better than $\min\{1+2\frac{1+q}{b_r-1},r\}$ for positive or negative error at most $q$. Otherwise, its competitive ratio is at least $r$. 
\label{thm:3.noisy.lower}
\end{theorem}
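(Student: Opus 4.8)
The plan is to mirror the lower bound of Theorem~\ref{thm:3.lower}, adapting the placement of the target so that the factor $1+q$ surfaces in the numerator of the ratio. First I would reduce to a \emph{cyclic} $r$-robust strategy exactly as there: by the transformations of~\cite{ultimate,jaillet:online}, any $r$-robust strategy may be replaced, without increasing its competitive ratio, by a cyclic strategy $Y=(y_i)_i$ whose search lengths satisfy $\alpha_Y=\limsup_i y_i^{1/i}\le b_r$, this last inequality being forced by robustness together with the definition of $b_r$ (Section~\ref{sec:preliminaries}). I fix the predicted ray to be $0$, explored at iterations $0,m,2m,\dots$. The crucial point, inherited from Theorem~\ref{thm:3.lower}, is that robustness pins down the \emph{asymptotic} geometric growth encoded by $\alpha_Y\le b_r$; this is what makes the bound bite, and it is why the argument is run through $\alpha_Y$ and Gal's theorem rather than through the exact recurrence.

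The new ingredient concerns positive error. Since a target may hide at positive error exactly $q$, the searcher is obliged to reach distance $d_h(1+q)$ on ray $0$; let $j$ be the first ray-$0$ iteration with $y_j\ge d_h(1+q)$. I would place the worst-case target at the \emph{near} end of the positive-error window, infinitesimally beyond $\max(y_{j-m},d_h)$, so that it is discovered only at iteration $j$ at cost $2\sum_{i<j}y_i+d(t)$. In the regime $y_{j-m}<d_h$ the worst target sits at $d(t)=d_h^{+}$, and using $d_h\le y_j/(1+q)$ the resulting ratio is at least
\[
1+\frac{2\sum_{i<j}y_i}{d_h}\ \ge\ 1+2(1+q)\,\frac{\sum_{i<j}y_i}{y_j}.
\]
To convert the trailing fraction into $1/(b_r-1)$ I apply Theorem~\ref{thm:sch} to the functional $F_j(Y)=\frac{\sum_{i<j}y_i}{y_j}$, which one checks satisfies its five hypotheses (it is scale-invariant, and condition~4 is the mediant inequality). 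Since $F_j(G_{\alpha_Y})\to\frac{1}{\alpha_Y-1}$, Gal's theorem gives $\sup_j F_j(Y)\ge\frac{1}{\alpha_Y-1}\ge\frac{1}{b_r-1}$, and choosing the prediction so that $j$ realizes this supremum yields competitive ratio at least $1+2\frac{1+q}{b_r-1}$, matching the upper bound of Theorem~\ref{thm:3.upper} and establishing tightness.

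Two points will require care, and I expect the first to be the main obstacle. The strategy $Y$ depends formally on $d_h$, so the instruction ``let $j$ be the maximizer of $F_j$'' is circular; I would break this precisely as in Theorem~\ref{thm:3.lower}, by letting $d_h\to\infty$ and reasoning through the asymptotic quantity $\alpha_Y$ rather than any fixed turn point, so that the supremum over far predictions realizes the Gal bound up to a vanishing error. Second, the clean factor $1+q$ (rather than a $\frac{1+q}{1-q}$-type factor) relies on the near-end condition $y_{j-m}<d_h$, equivalently $y_j/y_{j-m}>1+q$, which holds in the small-tolerance regime $q<b_r^{\,m}-1$; this is exactly what ``there exists $q>0$'' supplies, and for larger $q$ the previous ray-$0$ turn point falls inside the window and the analogous computation is governed by $\frac{\alpha_Y^{m}}{\alpha_Y-1}$, capping the bound at $r$ and producing the form $\min\{1+2\frac{1+q}{b_r-1},r\}$. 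The negative-error case is handled by the symmetric placement at the far end of the window, and the final ``otherwise'' clause is immediate: a target on a non-predicted ray renders the hint useless and reduces the instance to prediction-free star search, whose competitive ratio is $r$.
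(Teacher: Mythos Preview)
Your overall plan---reduce to a cyclic $r$-robust strategy, then invoke Gal's theorem on a ratio of partial sums to extract the $\tfrac{1}{b_r-1}$ term---is sound and is indeed the engine behind Theorem~\ref{thm:3.lower}. The difficulty, and the place where your proposal has a genuine gap, is in producing the extra factor $(1+q)$.

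Your argument sets $j$ to be the first ray-$0$ iteration with $y_j\ge d_h(1+q)$, places the target at $d_h^{+}$, and then uses $d_h\le y_j/(1+q)$. For this to give the displayed bound you need the target at $d_h^{+}$ to actually be discovered at iteration $j$, which requires $y_{j-m}<d_h$. You claim this ``holds in the small-tolerance regime $q<b_r^{\,m}-1$'', but that is not a property of an arbitrary $r$-robust cyclic strategy: the ratio $y_j/y_{j-m}$ between consecutive ray-$0$ lengths is only constrained to exceed~$1$, not any fixed quantity, so the overshoot $y_j/d_h-1$ at the Gal-maximizing index may be arbitrarily small (even zero). Worse, you need this overshoot condition to hold \emph{at the same $j$} for which $\sum_{i<j}y_i/y_j$ is close to $1/(b_r-1)$, and you have no control over that coincidence---especially since $Y$ itself depends on $h$, so you cannot first fix $Y$, then pick $j$, then back-solve for $d_h$. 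The circularity you flag is real, and ``let $d_h\to\infty$'' does not dissolve it: as $d_h$ varies, so does the whole sequence $Y$, and nothing pins down the overshoot at the limiting maximizer. (The sentence ``the searcher is obliged to reach $d_h(1+q)$'' is also misleading: the searcher does not know $q$ and is not obliged to do anything beyond being $r$-robust.)

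The paper's proof breaks this circularity by decoupling the two choices. It first invokes Theorem~\ref{thm:3.lower} as a black box to obtain a prediction $h$ at which the consistency ratio is already at least $1+\tfrac{2}{b_r-1}$; this step absorbs the Gal argument and fixes $h$ (and hence $X_h$) once and for all. Only then does it read off the overshoot $p=x_{i_h}/d_h-1$ of the now-fixed strategy at $h$, and chooses $q$ \emph{after the fact} as a function of $p$ (specifically $q\le p/4$), with a separate argument for the degenerate case $p=0$ where a target infinitesimally beyond $h$ forces an entire extra round. The point is that $q$ is determined by the strategy, not prescribed in advance---which is exactly what the quantifier ``there exists $q>0$'' in the statement allows. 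Your proposal implicitly treats $q$ as given and tries to manufacture $h$ to match; reversing that order, as the paper does, is what makes the argument close.
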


\begin{proof}
Let $X_h$ denote any $r$-robust strategy given prediction $h$. Consider first the case of positive error. Let $i_h$ be the iteration at which 
$X_h$ discovers $h$, then there must exist $p \in  [0,1]$ such that $x_{i_h} \geq (1+p)d_h$. Consider first the case $p>0$. Then from Theorem~\ref{thm:3.noisy.lower}, there exists $h$ (such that $d_h$ is sufficiently large), for which the consistency of $X_h$ is no better than $1+2\frac{1}{b_r-1}$. 
Let $\tilde{h}$ denote a position on the same ray as $h$, with $d(\tilde{h})=d_h(1+p)$. 
The above statement  implies that 
\[
d(X_h,\tilde{h})\geq d(\tilde{h})+2\frac{(1+p)d_h}{b_r-1},
\] 
and therefore for every target $t$ hiding on the same ray as $h$, and such that $d(t)=(1+q)d_h$, with 
$0< q\leq \frac{p}{4} $, we have 
\begin{align*}
\frac{d(X_h,t)}{d(t)} &\geq 1+2\frac{1+p}{(1+q)b_r-1} \\ &\geq
1+2\frac{1+p}{(1+\frac{p}{4})b_r-1} \\ 
&\geq
1+2\frac{1+\frac{p}{4}}{b_r-1} \\
&\geq 1+2\frac{1+q}{b_r-1}.
\end{align*}
 If $p=0$, then consider $t$ hiding at distance $(1+\epsilon)d_h$, and in the same ray as $h$. In this case, $X_h$ would locate $t$ in iteration $i_{h+1}$
at the earliest, and thus in this case the competitive ratio of $X_h$ is at least 
\[
3+\frac{2}{b_1-1}>1+\frac{2(1+\epsilon)}{b_r-1},
\]
for sufficiently small $\epsilon$. This settles the case that the error is positive.

Consider now the case that the error is negative, i.e., $d(t)=(1-q)d_h$. From Theorem~\ref{thm:3.noisy.lower} we have that 
\[
d(X,t)\geq d(t)+2\frac{1}{(1-q)(b_r-1)}, 
\]
therefore,
\[
\frac{d(X_h,t)}{d(t)} \geq 1+2\frac{1}{(1-q)b_r-1} \geq 1+2\frac{1+q}{b_r-1},
\]
for sufficiently small $q$.

Last, in all other cases, the competitive ratio of $X_h$ can be as high as its robustness, since the position of the target may be chosen to be any point in the star environment that maximizes the competitive ratio of the strategy. 
\end{proof}

\bibliographystyle{plain}
\bibliography{fault-contract,targets-arxiv,targets,refs}

\end{document}